\newtheorem{proposition}{Proposition}
\newtheorem{theorem}{Theorem}
\newtheorem{lemma}{Lemma}
\newtheorem{corollary}{Corollary}
\newcommand{\size}[1]{\mathsf{size}(#1)}
\newcommand{\VC}{\mathsf{VC}}
\newcommand{\calC}{C}
\newcommand{\calD}{D}
\newcommand{\calP}{\mathcal{P}}
\newcommand{\neigh}{N}
\newcommand{\node}{\text{gate}}
\newcommand{\nodes}{\text{gates}}
\newcommand{\arc}{\text{wire}}
\newcommand{\arcs}{\text{wires}}
\newcommand{\indegree}{\text{fanin}}
\newcommand{\inarc}{\text{input wire}}
\newcommand{\inarcs}{\text{input wires}}
\newcommand{\source}{\text{input}}
\newcommand{\sink}{\text{output}}
\newcommand{\sub}{\mathsf{sub}}
\newcommand{\cert}{\mathsf{cert}}
\newcommand{\vars}{\mathsf{vars}}
\newcommand{\outgate}[1]{\mathsf{output}(#1)}
\newcommand{\ingates}[1]{\mathsf{inputs}(#1)}
\begin{document}

\title{A Strongly Exponential Separation\\of DNNFs from CNF Formulas}

\newcommand*\samethanks[1][\value{footnote}]{\footnotemark[#1]}
\author{Simone Bova\thanks{Vienna University of Technology} \and Florent Capelli\thanks{IMJ UMR 7586  -  Logique,
Universit\'{e} Paris Diderot,
France} \and Stefan Mengel\thanks{ 
LIX UMR 7161,
Ecole Polytechnique,
France}\and Friedrich Slivovsky\samethanks[1]}

\date{}

\maketitle

\begin{abstract}
Decomposable Negation Normal Forms (DNNFs) are Boolean circuits in negation normal form where the subcircuits leading into each AND gate are defined on disjoint sets of variables. We prove a strongly exponential lower bound on the size of DNNFs for a class of CNF formulas built from expander graphs. As a corollary, we obtain a strongly exponential separation between DNNFs and CNF formulas in prime implicates form. This settles an open problem in the area of knowledge compilation (Darwiche and Marquis, 2002).
\end{abstract}

\section{Introduction}
The aim of knowledge compilation is to succinctly represent propositional knowledge bases in a format that allows for answering a number of queries in polynomial time~\cite{DarwicheM02}. 
Choosing a representation language generally involves a trade-off between succinctness and the range of queries that can be efficiently answered. For instance, CNF formulas are more succinct than prime implicate formulas (PIs), but the latter representation enjoys clause entailment checks in polynomial time whereas CNF formulas in general do not, unless $\mathrm{P}=\mathrm{NP}$~\cite{GogicKPS95,Chandra78}. The need to balance the competing requirements of succinctness and tractability has led to the introduction of a large variety of representation languages that strike this balance in different ways.

Decomposable Negation Normal Forms (DNNFs) are Boolean circuits in negation normal form (NNF) such that the subcircuits leading into an AND gate are defined on disjoint sets of variables~\cite{Darwiche01}. 
DNNFs are among the most succinct representation languages considered in knowledge compilation---for instance, they generalize variants of binary decision diagrams such as ordered binary decision diagrams (OBDDs) and even free binary decision diagrams (FBDDs, also known as read-once branching programs). They have also been studied in circuit complexity, 
under the name of multilinear Boolean circuits~\cite{SenguptaV94,PonnuswamiV04,Krieger07}. 

In this paper, we consider the relative succinctness of DNNFs and CNF formulas. On the one hand, DNNFs can be exponentially more succinct than CNF formulas~\cite{GogicKPS95}. On the other hand, Darwiche and Marquis observed that CNFs do not admit polynomial DNNF representations unless the polynomial hierarchy collapses, while posing an unconditional proof of such a separation as an open problem~\cite{DarwicheM02}. 

An unconditional, \emph{weakly exponential} separation can be derived from known results (see the section on related work below). By using a more direct construction that leverages the combinatorial properties of expander graphs, 
we obtain a \emph{strongly exponential} separation (Theorem~\ref{thm:mainnontechnical}):
\begin{quote}
\emph{There is a class $\mathcal{C}$ of CNF formulas such that for each $F \in \mathcal{C}$, the DNNF size of $F$ is $2^{\Omega(n)}$, where $n$ is the number of variables of $F$.}
\end{quote}
The formulas in $\mathcal{C}$ satisfy strong syntactic restrictions. In particular, they are in prime implicate form, so we immediately obtain an exponential separation of DNNFs from PIs~(Corollary~\ref{cor:pi}), answering an open question by Darwiche and Marquis~\cite{DarwicheM02}.

Our result further improves the best known lower bound of $\Omega\big(\frac{2^{\sqrt{n}}}{\sqrt[4]{n}}\big)$ on the DNNF size of a Boolean function of $n$ variables~\cite{Krieger07}.

\paragraph*{Related Work.} We observe that an unconditional, weakly exponential separation of DNNFs from CNF formulas can be obtained from known results as follows. Let $F$ be a CNF formula encoding the run of a nondeterministic polynomial-time Turing machine deciding the clique problem for some fixed input size. A DNNF representation of~$F$ can be turned into a DNNF computing the clique function by projecting on the variables encoding the input. This can be done without increasing the size of the DNNF~\cite{Darwiche01}. Since an optimal DNNF computing a monotone function is monotone \cite{Krieger07}, weakly exponential lower bounds for monotone circuits computing the clique function~\cite{AlonB87} transfer to lower bounds on the DNNF size of~$F$.

The formulas used to prove our main result are based on expander graphs and were originally introduced 
to establish an exponential lower bound for the OBDD size of CNFs~\cite{BovaS14}. The present paper leverages 
a recent result by Razgon~\cite[Theorem~4]{Razgon14c} to lift this lower bound to DNNFs; indeed, 
we slightly improve Razgon's result, while at the same time providing a significantly shorter proof 
based on a new combinatorial result (Theorem~\ref{th:florent}).

There is a rich literature on lower bounds for more restricted representation languages, such as OBDDs or FBDDs~\cite{Wegener00,Jukna12}. Moreover, certain subclasses of DNNFs, so-called decision-DNNFs, have been recently considered in database theory in the context of probabilistic databases. In this setting, lower bounds are obtained by a quasipolynomial simulation of decision-DNNFs by FBDDs in combination with known exponential lower bounds for FBDDs~\cite{BeameLRS13,BeameLRS14}. 
Pipatsrisawat and Darwiche have proposed a framework for showing lower bounds on structured DNNFs~\cite{PipatsrisawatD10}, a subclass of DNNFs in which the variables respect a common tree-ordering. 

\section{Preliminaries}\label{sect:prelim}

Let $X$ be a countable set of variables.  A literal is a variable ($x$) or a negated variable ($\neg x$).  
An assignment is a function $f$ from $X$ to the constants $0$ and $1$.  We occasionally  
identify an assignment $f$ with the set of literals $\{ \neg x \colon f(x)=0 \} \cup \{ x \colon f(x)=1 \}$.  

\paragraph*{NNFs.} A \emph{negation normal form (NNF)} $C$ (also known as a \emph{De Morgan} circuit) 
is a node labeled directed acyclic graph (DAG), 
whose labeled nodes and arcs are respectively called the \emph{gates} and \emph{wires} of $C$.  
The underlying DAG has a unique sink (outdegree $0$) node, referred to as the \emph{output gate} of the circuit, and denoted by $\outgate{C}$.  
The source nodes of $C$ (indegree $0$), denoted by $\ingates{C}$, are referred to as the \emph{input gates} of $C$ and are labeled by a constant 
($0$ or $1$) or by a literal $x$ or $\neg x$ for $x \in X$.  We let $\mathsf{vars}(C)$ 
denote the set of variables occurring in the labels of input gates of $C$.  The non source nodes of $C$, referred as \emph{internal gates}, 
are labeled by $\wedge$ or $\vee$.  
In this paper, the \emph{size} of $C$, in symbols $\mathsf{size}(C)$, is the number of wires in $C$.  

Let $G$ be a DAG and let $v$ be a node in $G$. The \emph{subgraph of $G$ sinked at $v$} 
is the DAG whose node set is 
$$V'=\{v\} \cup \{ u \colon \text{there exists a directed path from $u$ to $v$ in $G$} \}\text{,}$$
and whose arcs are exactly the arcs of $G$ among the nodes in $V'$.  Let $C$ be an NNF, 
and let $v$ be a node in the DAG $G$ underlying $C$.  We let $\mathsf{sub}(C,v)$ denote the 
\emph{subcircuit of $C$ sinked at $v$}, that is, the NNF 
whose underlying DAG is the subgraph of $G$ sinked at $v$, with the same labels, 
and whose variables are those labeling the input gates of $\mathsf{sub}(C,v)$.

Let $C$ be an NNF, $v$ be a \node\ in $C$, and $f$ be an assignment.  
The \emph{value of $\mathsf{sub}(C,v)$ under $f$}, in symbols $\mathsf{sub}(C,v)(f)$, 
is defined inductively as usual.  The \emph{value of $C$ under $f$}, in symbols $C(f)$, 
is equal to $\mathsf{sub}(C,\outgate{C})(f)$.  We let $\mathsf{sat}(C)$ denote the set of \emph{satisfying assignments} of $C$, 
that is, assignments $f$ such that $C(f)=1$.  
Let $C$ be an NNF on variables $\mathsf{vars}(C)$.  Two NNFs $C$ and $C'$ are \emph{equivalent} 
if $$C(f)=C'(f)$$ for all assignments $f$.

\paragraph*{Certificates.} Let $C$ be an NNF.  
A \emph{certificate} of $C$ 
is an NNF $T$ 
whose \nodes\ and \arcs\ are subsets of the \nodes\ and \arcs\ of $C$ satisfying the following: 
\begin{itemize}
\item $\outgate{C}=\outgate{T}$; 
\item if a \node\ $v$ is in $C \cap T$, and $v$ is a $\wedge$-\node\ in $C$ with \inarcs\ from \nodes\ $v_1,\ldots,v_i$, 
then the \nodes\ $v_1,\ldots,v_i$ and the \arcs\ $(v_1,v),\ldots,(v_i,v)$ are in $T$ ($i \geq 0$); 
\item if a \node\ $v$ is in $C \cap T$, and $v$ is an $\vee$-\node\ in $C$ with \inarcs\ from \nodes\ $v_1,\ldots,v_i$, 
then exactly one \node\ $w \in \{v_1,\ldots,v_i\}$ is in $T$ and the \arc\ $(w,v)$ is in $T$ ($i \geq 0$).
\end{itemize}
We let $\mathsf{cert}(C)$ denote the set of certificates of $C$.  

Satisfying assignments and certificates of an NNF are nicely related as follows.  

\begin{proposition}
\label{prop:mods-vs-trees}
Let $C$ be an NNF and let $f$ be an assignment.  
Then, $f \in \mathsf{sat}(C)$ if and only if there exists $T \in \mathsf{cert}(C)$ such that $f \in \mathsf{sat}(T)$.
\end{proposition}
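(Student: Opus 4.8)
The plan is to prove the two implications separately, disposing of the easier ``certificate implies satisfaction'' direction first and then giving the constructive ``satisfaction implies certificate'' direction.

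For the backward implication, suppose $T \in \cert(C)$ with $f \in \sat(T)$. I would argue by bottom-up induction (along a topological order of the \nodes\ of $T$) that for every \node\ $v$ of $T$, $\sub(T,v)(f) = 1$ implies $\sub(C,v)(f) = 1$. The base case is the input gates, where $\sub(T,v)$ and $\sub(C,v)$ coincide. For the inductive step the two certificate conditions do exactly the right thing: at a $\wedge$-\node\ every $C$-input survives in $T$, so $\sub(T,v)(f) = \bigwedge_j \sub(T,v_j)(f) = 1$ forces each $\sub(C,v_j)(f) = 1$ by the inductive hypothesis and hence $\sub(C,v)(f) = 1$; at a $\vee$-\node\ a single input $w$ survives, so $\sub(T,v)(f) = \sub(T,w)(f) = 1$ gives $\sub(C,w)(f) = 1$ and therefore $\sub(C,v)(f) = 1$. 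Applying this to $\outgate{T} = \outgate{C}$ yields $C(f) = \sub(C,\outgate{C})(f) = 1$.

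For the forward implication, suppose $f \in \sat(C)$. I would build a certificate $T$ top-down, starting from $\outgate{C}$ and maintaining the invariant that every \node\ placed in $T$ satisfies $\sub(C,v)(f) = 1$. Whenever a $\wedge$-\node\ $v$ lies in $T$, I add all its $C$-inputs together with the corresponding wires; since $\sub(C,v)(f) = 1$, each such input also evaluates to $1$, so the invariant is preserved. Whenever a $\vee$-\node\ $v$ lies in $T$, I add a single input $w$ with $\sub(C,w)(f) = 1$---which exists precisely because $\sub(C,v)(f) = 1$---together with the wire $(w,v)$. By construction $T$ meets both certificate conditions and has $\outgate{T} = \outgate{C}$, so $T \in \cert(C)$; a dual bottom-up induction then shows $\sub(T,v)(f) = 1$ for every \node\ $v$ of $T$, and in particular $T(f) = 1$, that is, $f \in \sat(T)$.

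The one point requiring care is that $C$ is a DAG rather than a tree, so a single gate may be reachable from $\outgate{C}$ along several paths. The top-down construction must therefore fix, once and for all, one chosen input for each $\vee$-\node\ that enters $T$ (a choice depending only on the gate and on $f$), so that the fan-in of every $\vee$-\node\ in $T$ is exactly one regardless of how many parents it has in $T$; the clean way to phrase this is to take $T$ to be the least set of gates and wires closed under the two placement rules. I expect this bookkeeping to be the main obstacle, whereas the value-preservation inductions in both directions are routine. Constant and empty gates cause no trouble: a $0$-labelled input and an empty $\vee$-\node\ both evaluate to $0$ and are thus excluded by the invariant, while an empty $\wedge$-\node\ evaluates to $1$ and simply contributes nothing.
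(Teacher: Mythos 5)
Your proposal is correct and takes essentially the same approach as the paper: your backward direction is the paper's bottom-up induction transferring $\sub(T,v)(f)=1$ to $\sub(C,v)(f)=1$ node by node, and your forward construction is exactly the paper's selection of \arcs\ $(u,v)$ with $\sub(C,u)(f)=1$ (its ``activated'' \arcs), followed by the same routine verification that the resulting certificate is satisfied. The only difference is presentational: you make explicit the per-gate choice at $\vee$-\nodes\ needed for well-definedness on a DAG with reconvergent paths, a point the paper compresses into ``it is readily verified.''
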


\newcommand{\pfmodsvstrees}[0]{
\begin{proof}
For the forward direction, let $f \in \mathsf{sat}(C)$.  
Call a \arc\ $(u,v)$ in the DAG underlying $C$ \emph{activated} by $f$ 
if $f$ satisfies the subcircuit of $C$ sinked at $u$, in symbols $\mathsf{sub}(C,u)(f)=1$.  
It is readily verified that there exists a certificate $T$ for $C$ 
containing only \arcs\ activated by $f$.  Moreover, $f \in \mathsf{sat}(T)$ 
because by construction $\mathsf{sub}(T,v)(f)=1$ for all \source\ \nodes\ $v$ of $T$, 
and therefore $\mathsf{sub}(T,\outgate{T})(f)=1$.

For the backward direction, let $T$ be a certificate of $C$ such that 
$f \in \mathsf{sat}(T)$.  By induction on the structure of $C$, we prove that for all $t \in T$ it holds that 
$\mathsf{sub}(T,t)(f)=\mathsf{sub}(C,t)(f)=1$.  Since $\outgate{C}=\outgate{T}$, 
we conclude that $$C(f)=\mathsf{sub}(C,\outgate{C})(f)=\mathsf{sub}(T,\outgate{T})(f)=1\text{.}$$  Then, 
$f \in \mathsf{sat}(C)$.

If $t$ is an \source\ \node\ of $T$, then $\mathsf{sub}(T,t)(f)=1$ because 
otherwise $f$ does not satisfy $T$.  We also have that $t$ is an \source\ \node\ of $C$, 
hence $\mathsf{sub}(C,t)(f)=1$.  Let $t$ be a $\vee$-\node\ in $T$, 
with \inarcs\ from \nodes\ $t_1,\ldots,t_i$ in $C$; 
say without loss of generality that $t_1$ is chosen in $T$.  
By the induction hypothesis, $\mathsf{sub}(T,t_1)(f)=\mathsf{sub}(C,t_1)(f)=1$; 
hence, $\mathsf{sub}(T,t)(f)=\mathsf{sub}(C,t)(f)=1$.  The case where 
$t$ is a $\wedge$-\node\ is similar.
\end{proof}}

\pfmodsvstrees

\paragraph*{DNNFs and CNFs.} An NNF $D$ is \emph{decomposable} (in short, a \emph{DNNF}) if for all $\wedge$-\nodes\ $v$ 
with \inarcs\ from \nodes\ $v_1,\ldots,v_i$ and all $j,j' \in \{1,\ldots,i\}$, $j \neq j'$, the variable sets of 
the subcircuits of $D$ sinked at $v_j$ and $v_{j'}$ are disjoint, in symbols, 
$$\mathsf{vars}(\mathsf{sub}(D,v_j)) \cap \mathsf{vars}(\mathsf{sub}(D,v_{j'})) = \emptyset\text{.}$$

A \emph{conjunctive normal form} (in short, \emph{CNF}) is a finite conjunction 
of clauses (finite disjunctions of literals).  Equivalently, a CNF is 
an NNF where the maximum number of \arcs\ on a path from a \source\ \node\ to the \sink\ \node\ is $2$, 
and each $\vee$-\node\ has \inarcs\ only from \source\ \nodes.  A CNF $F$ is \emph{monotone} if 
its labels do not contain negative literals, a \emph{$k$-CNF} if the \indegree\ of $\vee$-\nodes\ is at most $k$, 
and a \emph{read $k$ times} CNF if, for every $x \in \mathsf{vars}(F)$, 
the number of \arcs\ leaving nodes whose label contain the variable $x$ is at most $k$.

For a CNF $E$, we denote by $\mathsf{DNNF}(E)$ the size of the smallest DNNF equivalent to $E$, that is
$$\mathsf{DNNF}(E)=\min\{ \mathsf{size}(D) \colon \text{$D$ is a DNNF equivalent to $E$} \}\text{.}$$

\paragraph*{Graphs.}  We refer to a standard reference for basic notions and facts in graph theory \cite{Diestel05}.  
Let $G=(V,E)$ be a graph.  A \emph{vertex cover} of $G$ is a subset $C$ of the vertices $V$ such that $\{u,v\} \cap C \neq \emptyset$ for all $\{u,v\} \in E$.  We denote by $\VC(G)$ the set of the vertex covers of $G$.  

We observe two facts about graphs that will be useful later in proving the main result.  
The first is that vertex covers of graphs in a class of graphs of bounded degree are large.

\begin{proposition}
\label{prop:vcbounds}
Let $C \in \VC(G)$ be a vertex cover of a connected graph $G=(V,E)$ of degree $d$.  
Then, $|V|/(d+1) \leq |C|$.  
\end{proposition}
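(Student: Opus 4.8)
The plan is to use the elementary but crucial observation that the complement of a vertex cover is an independent set, and then to double-count the edges that cross between $C$ and $V \setminus C$. Since $C \in \VC(G)$, every edge of $G$ meets $C$; in particular no edge can have both endpoints in $V \setminus C$, so $V \setminus C$ is independent and every edge incident to a vertex of $V \setminus C$ has its other endpoint in $C$. This means the crossing edges are exactly the edges incident to $V \setminus C$, and I will estimate their number from both sides.

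For the lower estimate I would invoke connectivity: assuming $G$ has at least one edge (the edgeless case being a single vertex, where the bound is immediate), connectedness forces every vertex---and in particular every vertex of $V \setminus C$---to have degree at least one. Since all such edges go into $C$, the number of crossing edges is at least $|V \setminus C|$. For the upper estimate I would use the degree bound: each of the $|C|$ vertices of $C$ is incident to at most $d$ edges, so there are at most $d\,|C|$ crossing edges. Comparing the two bounds gives $|V \setminus C| \le d\,|C|$, and hence $|V| = |C| + |V \setminus C| \le |C| + d\,|C| = (d+1)|C|$, which rearranges to the desired inequality $|V|/(d+1) \le |C|$.

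The argument is short, so the only real care needed is in the degenerate boundary cases rather than in any hard combinatorial step. The place I expect to have to be precise is the use of connectivity: it is exactly what guarantees that a vertex left out of the cover actually contributes a crossing edge, and without it (e.g.\ for a graph with isolated vertices) one could make $V \setminus C$ arbitrarily large while keeping $C$ small, so the factor $d+1$ would fail. An alternative route that packages the same idea is to note that a vertex cover in a graph of degree $d$ covers at most $d\,|C|$ edges, so $|E| \le d\,|C|$, and then to combine this with the standard bound $|E| \ge |V|-1$ for connected graphs together with $|C| \ge 1$ to reach $|V| \le (d+1)|C|$.
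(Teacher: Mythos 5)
Your proof is correct and follows essentially the same route as the paper's: observing that $V \setminus C$ is independent, using connectivity to get at least $|V \setminus C|$ edges into $C$, and bounding these by $d\,|C|$ via the degree bound to obtain $|V \setminus C| \le d\,|C|$. Your extra attention to the degenerate edgeless case and the alternative via $|E| \ge |V|-1$ are fine additions but not needed; the core argument matches the paper's exactly.
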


\newcommand{\pfvcbounds}[0]{
\begin{proof}
Let $C$ be a vertex cover of a connected graph $G$ of maximum degree $d$.  
Then, $V \setminus C$ is an independent set.  Since $G$ is connected, 
each vertex in $V \setminus C$ is incident 
to at least one edge with a vertex in $C$.  Hence, 
there are at least $|V\setminus C|$ edges between $C$ 
and $V\setminus C$.  Since each vertex in $C$ has degree at most $d$, 
$$|V|-|C|=|V\setminus C| \leq d|C|\text{,}$$  
and we are done.  
\end{proof}}

\pfvcbounds

The second is that a rooted binary tree with a large number of leaves 
always contains a subtree with a large but not too large number of leaves.  

\begin{proposition}
\label{prop:tree}
Let $T$ be a rooted binary tree with at least $\ell$ leaves. Then there exists a vertex $v$ of $T$ such that the number of leaves of the subtree of $T$ rooted in $v$ is at least $\ell$ and at most $2\ell$.
\end{proposition}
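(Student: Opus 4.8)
The plan is to walk down the tree from the root, always stepping into the child subtree with more leaves, until the number of leaves first drops below the threshold we want; the last vertex before this drop will be the witness $v$.

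More precisely, I would argue as follows. For a vertex $u$ of $T$, write $L(u)$ for the number of leaves of the subtree rooted at $u$. Starting at the root $r$, we have $L(r) \geq \ell$ by hypothesis. We descend greedily: given a current vertex $u$ with $L(u) \geq \ell$, if $u$ is itself a leaf then $L(u)=1 \geq \ell$ forces $\ell = 1$ and we may simply take $v=u$ (since $1 \leq 2 = 2\ell$); otherwise $u$ has one or two children, and we move to the child $u'$ with the larger value $L(u')$. We continue this descent as long as the current vertex $u$ satisfies $L(u) \geq \ell$.

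The key observation is what happens at the child. Since $T$ is binary, a non-leaf $u$ has at most two children, and the leaves of the subtree at $u$ are exactly the union of the leaves of the subtrees at its children; hence $L(u) \leq 2 L(u')$ where $u'$ is the child maximizing $L$. Now consider the first vertex $v$ reached in the descent for which at least one of the following holds: $v$ is a leaf, or the chosen child $v'$ of $v$ satisfies $L(v') < \ell$. In the first case $\ell=1$ and $v$ works as above. In the second case we have $L(v) \geq \ell$ by the descent invariant, and since $v$ is the child of larger leaf-count among $v$'s children, $L(v) \leq 2 L(v') < 2\ell$. Either way $v$ is a vertex with $\ell \leq L(v) \leq 2\ell$, as required; and the descent must terminate because $T$ is finite and $L$ strictly decreases only finitely often.

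The only subtle point, and the thing to state carefully rather than a genuine obstacle, is the inequality $L(u) \leq 2 L(u')$: it relies on $T$ being \emph{binary} (so at most two children, each contributing at most $L(u')$ leaves to the total). For a tree of unbounded degree the factor $2$ would fail, so the proof genuinely uses the hypothesis. I would therefore make explicit that for a non-leaf $u$ with children $u_1$ (and possibly $u_2$), we have $L(u) = L(u_1) + L(u_2)$ (with the convention $L(u_2)=0$ if $u$ has a single child), and that choosing $u'$ to maximize $L$ gives $L(u) \leq 2L(u')$.
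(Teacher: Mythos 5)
Your proof is correct and takes essentially the same approach as the paper: the paper phrases your greedy descent into the child with the larger leaf count as an induction on the size of $T$, using exactly the same factor-$2$ pigeonhole observation $L(u) \leq 2L(u')$ that you isolate as the key inequality. Your explicit stopping rule and handling of single-child vertices correspond directly to the paper's inductive cases, so nothing is missing.
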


\newcommand{\pftree}[0]{
\begin{proof}
  The proof is by induction on the size of $T$. If the number of leaves of $T$ is already between $\ell$ and $2\ell$, then we can choose the root. Now if $T$ has more than $2 \ell$ leaves then either the root has one child $w$. In this case, we apply the induction hypothesis on the subtree of $T$ rooted in $w$, since it has also more than $2 \ell$ leaves, that is more than $\ell$ leaves.

  Now assume that the root has two children $w_1$, $w_2$. Let $T_1$ and $T_2$ be the subtrees rooted in $w_1$ and $w_2$ respectively. Assume without loss of generality that $T_1$ has more leaves than $T_2$. Thus $T_1$ has more than $2\ell/2=\ell$ leaves. By induction, there exists a vertex $v$ in $T_1$ and thus in $T$ such that the subtree rooted in $v$ has at least $\ell$ and at most $2\ell$ leaves.
\end{proof}
}
\pftree

\paragraph*{Expanders.} Let $G=(V,E)$ be a graph. 
For every $S \subseteq V$, we let $\neigh_{S,G}$ 
denote the \emph{open neighbourhood} of $S$ in $G$, in symbols, 
$$\neigh_{S,G}=\{ v \in V \setminus S \colon \textup{ there exists $u \in S$ such that $\{u,v\} \in E$} \}\text{;}$$
we write $\neigh_v$ instead of $\neigh_{\{v\},G}$ ($v \in V$), 
and $\neigh_S$ instead of $\neigh_{S,G}$ if the intended graph $G$ is clear from the context.  
For a vertex $v \in V$, we denote the degree of $v$ by $d(v) = |\neigh_v|$; 
the degree of $G$ is the maximum degree attained over its vertices.

Let $d \geq 3$ and $c>0$.  A graph $G=(V,E)$ is a \emph{$(c,d)$-expander} 
if $G$ has degree~$d$ and for all $S \subseteq V$ such that $|S| \leq |V|/2$ it holds that 
\begin{equation}\label{eq:expansion}
|\neigh_{S,G}| \geq c|S| \text{.}
\end{equation}
Note that a $(c,d)$-expander is connected and that taking $|S|=|V|/2$ implies that $c \leq 1$.

\begin{theorem}[Section 9.2 in \cite{AlonSpencer00}]\label{th:expex}
For all $d \geq 3$, there exists $c>0$ and a sequence 
of graphs $\{G_i \mid i \in \mathbb{N} \}$ such that $G_i=(V_i,E_i)$ is a $(c,d)$-expander  
and $|V_i| \to \infty$ as $i  \to \infty$ ($i \in \mathbb{N}$). 
\end{theorem}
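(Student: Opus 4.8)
The plan is to establish the statement by the probabilistic method, which is the standard route to the existence of expanders. For each sufficiently large $n$ with $dn$ even, I would show that a \emph{random} $d$-regular graph on $n$ vertices is a $(c,d)$-expander with positive probability, where $c>0$ depends only on $d$; selecting one such graph for each $n$ in a strictly increasing sequence then yields a family $\{G_i \mid i \in \mathbb{N}\}$ with $|V_i| \to \infty$, as required.

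First I would fix a tractable distribution on $d$-regular graphs, for instance the \emph{configuration model}: give each of the $n$ vertices $d$ half-edges and choose a uniformly random perfect matching of the $dn$ half-edges. This may create loops or parallel edges, but it is classical that the probability of obtaining a simple graph is bounded below by a positive constant depending only on $d$. Consequently, any event of vanishing probability in the configuration model still has vanishing probability after conditioning on simplicity, the conditioning costing only a constant factor that the estimates below absorb.

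Next I would bound the probability that the random graph fails to expand. Call a set $S$ with $|S| \leq n/2$ \emph{bad} if $|\neigh_S| < c|S|$. If $S$ is bad, then every edge incident to $S$ has both endpoints inside a set $S \cup T$ of size at most $(1+c)|S|$, where $T = \neigh_S$ has size less than $c|S|$. A union bound over the bad set $S$ of size $s$ and over the target set $T$ bounds the failure probability by
\begin{equation*}
\sum_{s=1}^{\lfloor n/2 \rfloor} \binom{n}{s} \binom{n}{\lceil cs \rceil} \left( \frac{(1+c)s}{n} \right)^{ds}\text{,}
\end{equation*}
where the last factor estimates the probability that all $ds$ half-edges at $S$ are matched within $S \cup T$. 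Applying $\binom{n}{k} \leq (en/k)^k$ and simplifying, the $s$-th summand is at most $\big( K(c)\,(s/n)^{d-1-c} \big)^{s}$, where $K(c) = e^{1+c} c^{-c} (1+c)^{d}$ depends only on $c$ and $d$.

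The crux, and the step I expect to be the main obstacle, is choosing $c$ so that this sum tends to $0$. Because $d \geq 3$, the exponent $d-1-c$ is positive once $c$ is small, so the base $K(c)(s/n)^{d-1-c}$ increases in $s$ over the range $s \leq n/2$ and attains its maximum $K(c)(1/2)^{d-1-c}$ near $s \approx n/2$. The delicate regime is exactly this one, where the binomial coefficients are largest: I must take $c$ small enough that $K(c)(1/2)^{d-1-c} < 1$, which is possible since letting $c \to 0^{+}$ drives this quantity to $e \cdot (1/2)^{d-1} \leq e/4 < 1$ for every $d \geq 3$. Fixing such a positive $c$ makes every base strictly less than $1$; the sum is then dominated by its first term $K(c)\,n^{-(d-1-c)}$ and is therefore $O\big(n^{-(d-1-c)}\big) \to 0$. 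Hence for all large $n$ a $(c,d)$-expander on $n$ vertices exists, and the desired sequence is obtained.
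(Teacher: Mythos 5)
You are proving a statement the paper itself does not prove---it is imported verbatim from Alon and Spencer---so your proposal has to stand on its own, and its overall plan (configuration model, union bound over bad sets $S$, with the critical regime $s \approx n/2$) is indeed the standard route. The genuine gap is in the central probability estimate. In the configuration model, the event that all $ds$ half-edges of $S$ are matched inside $S \cup T$ does \emph{not} have probability at most $((1+c)s/n)^{ds}$: a matching consumes half-edges in \emph{pairs}, and whenever two half-edges of $S$ are paired with each other, a single random choice accounts for two of them. Sequential conditioning therefore yields only about $ds/2$ factors, not $ds$, each of size roughly $d(1+c)s/(dn-2i)$ rather than $(1+c)s/n$; the valid elementary bound is of the form $\big((1+c)s/(n-2s)\big)^{ds/2}$. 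Your bound is genuinely false, not merely unproved: take $T=\emptyset$ and $|S|=\beta n$; the probability that the $ds$ half-edges of $S$ pair among themselves is exactly $(ds-1)!!\,(dn-ds-1)!!/(dn-1)!!$, whose exponential rate is $e^{-(dn/2)H(\beta)}$ with $H(\beta)=-\beta\ln\beta-(1-\beta)\ln(1-\beta)$, and this is strictly larger than your claimed $\beta^{\beta dn}$ for every $\beta \in (0,1/2)$, because $(1-\beta)\ln(1-\beta) > \beta\ln\beta$ on that interval (e.g.\ at $\beta = 0.1$ the true rate is $e^{-0.163\,dn}$ versus your $e^{-0.230\,dn}$). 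So you underestimate the failure probability by an exponential factor.

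This error is fatal precisely at the step you yourself flagged as the crux. Redoing your computation with the corrected exponent $ds/2$ (even charitably ignoring the worsened denominators $n-2s$), the $s$-th summand becomes $\big(e^{1+c}c^{-c}(1+c)^{d/2}(s/n)^{d/2-1-c}\big)^{s}$, and at the maximum $s \approx n/2$ the base tends, as $c \to 0^{+}$, to $e \cdot 2^{-(d/2-1)}$, which is about $1.92 > 1$ for $d=3$ and $1.36 > 1$ for $d=4$. Thus the union bound no longer closes exactly for the smallest degrees---and the paper actually needs $d=3$, since Theorem~\ref{thm:mainnontechnical} invokes $(e,3)$-expanders, and the statement quantifies over all $d \geq 3$, so establishing it only for $d \geq 5$ would not suffice. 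A correct proof needs a sharper count: either carry out the exact double-factorial/entropy computation in the configuration model (at $s=n/2$, for instance, $\binom{n}{n/2}2^{-dn/2} \to 0$ already for $d = 3$, and the full entropy comparison does go through for all $d \geq 3$ with $c$ small, which is essentially Bollob\'as's isoperimetric argument for random regular graphs), or switch to the model the cited source uses: a bipartite $d$-regular graph built from $d$ independent uniform permutations, where the per-permutation probability $\Pr[\pi(S)\subseteq T] = \binom{|T|}{s}/\binom{n}{s}$ gives clean, genuinely independent factors and the union bound closes for $d \geq 3$, after which one converts one-sided bipartite expansion into the vertex expansion of the paper's definition. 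A minor further point: your claim that the sum is ``dominated by its first term'' needs the routine split into small $s$ and $s = \Theta(n)$, since the terms $x_s^s$ are not monotone, but that is easily repaired once every base is strictly below $1$.
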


\newcommand{\pfflorent}[0]{
\begin{proof}
Let $G = (V,E)$ be a graph of degree $d$, and let $S \subseteq V$.  By induction on $|S| \geq 0$, 
we prove that 
$$|\VC(G, S)| \leq \bigg(\prod_{s \in S} \frac{2^{d(s)}}{1+2^{d(s)}}\bigg)|\VC(G)|\text{.}$$
The statement follows since for all $s \in S$, we have $d(s) \leq d$ and thus:
$$\frac{2^{d(s)}}{1+2^{d(s)}} \leq \frac{2^{d}}{1+2^{d}}$$ 

The base case $S = \emptyset$ is trivial. If $S \neq \emptyset$, let $t \in S$. By Theorem~\ref{th:florent}, 
\[ |\VC(G,S)|  \leq \frac{2^{d(t)}}{1+2^{d(t)}} |\VC(G, S\setminus \{t\})|\]
and thus by the induction hypothesis
\[ |\VC(G,S)|  \leq \frac{2^{d(t)}}{1+2^{d(t)}} \bigg(\prod_{s \in S \setminus \{t\}} \frac{2^{d(s)}}{1+2^{d(s)}}\bigg) |\VC(G)| = \bigg(\prod_{s \in S} \frac{2^{d(s)}}{1+2^{d(s)}}\bigg) |\VC(G)|\]
\end{proof}}

\newcommand{\pfkeyflorent}[0]{
\begin{proof}
Let $G = (V,E)$ be a graph, $S \subseteq V$, and $s \in S$. Let $f$ be the mapping defined as $f(\calC) = (\neigh_s \cap \calC, (\calC \setminus \{s\}) \cup \neigh_s)$ for all $\calC \in \VC(G,S)$. We denote by $\calP(N_s)$ the power set of $N_s$, that is $\{A : A \subseteq N_s\}$.

First remark that for all $\calC \in \VC(G,S)$, $f(\calC) = (A, \calD) \in \calP(\neigh_s) \times \big ( \VC(G, S\setminus \{s\}) \setminus \VC(G,S) \big )$. It is clear that $A = \calC \cap \neigh_s \subseteq \neigh_s$ thus $A \in \calP(\neigh_s)$. Moreover, if $\calC$ is a vertex cover of $G$, then $\calD = (\calC \setminus \{s\}) \cup \neigh_s$ is also a vertex cover of~$G$, since each edge $e$ of $G$ is covered by $\calC$: If $s$ is not an endpoint of $e$, then $e$ is still covered by $\calC \setminus \{s\}$, and thus also by $\calD$. Otherwise $e = \{s,t\}$ with $t \in \neigh_s$. Thus $e$ is covered by $\calD$ since $t \in \calD$. Finally, if $S \subseteq \calC$, then $S\setminus\{s\} \subseteq \calD$ and $s \notin \calD$. Thus $\calD \in \VC(G, S \setminus \{s\})$ and $\calD \notin \VC(G, S)$.

We now prove that $f$ is an injection. Let $\calC, \calC' \in \VC(G,S)$ such that $(A,\calD) = f(\calC) = f(\calC')$. Then, by definition, $\calC \cap \neigh_s = \calC' \cap \neigh_s$ and $\calC \setminus \{s\} \cup \neigh_s = \calC' \setminus \{s\} \cup \neigh_s$. Since $s$ is both in $\calC$ and in $\calC'$, we have $\calC \cup \neigh_s = \calC' \cup \neigh_s$ and $\calC \cap \neigh_s = \calC' \cap \neigh_s$, that is $C = C'$. Thus $f$ is an injection. It follows that:
\[ |\VC(G,S)| \leq |\calP(\neigh_s)| \times |\VC(G,S\setminus\{s\}) \setminus \VC(G,S)|.\]
Since $\VC(G,S) \subseteq \VC(G,S\setminus\{s\})$ , we have $|\VC(G,S\setminus\{s\}) \setminus \VC(G,S)| = |\VC(G,S\setminus\{s\})|-|\VC(G,S)|$ and it is clear that $|\calP(\neigh_s)| = 2^{d(s)}$. It follows that:
$${|\VC(G, S)|} \leq \left(\frac{2^{d(s)}}{1+2^{d(s)}}\right){|\VC(G,S\setminus \{s\})|}\text{.}$$
\end{proof}}

\section{Outline of the Proof}

We consider a class of what we call graph CNFs.  
A \emph{graph CNF} is a monotone $2$-CNF corresponding to a graph  
in that a clause $x \vee y$ in the CNF corresponds to an edge $\{x,y\}$ in the graph.  Note that 
the models of a graph CNF correspond exactly to the vertex covers of the underlying graph. 

More specifically, our graph CNFs correspond to an infinite family of \emph{expander graphs}.  
A graph $G=(V,E)$ in this family is highly connected but sparse 
(in the sense of Theorem~\ref{th:expex}).  As a consequence, 
given any $S \subseteq V$ of size no larger than $|V|/2$, but linear in $|V|$, 
it is possible to find a matching of size linear in $|V|$ between $S$ and $V\setminus S$ (Corollary~\ref{cor:large-point-large-matching}).  
This is crucial to establish a strongly exponential lower bound.


An optimal DNNF computing a monotone Boolean function is monotone \cite[Lemma~3]{Krieger07}.  
Since graph CNFs are monotone, it suffices to prove a lower bound for monotone DNNFs (Proposition~\ref{prop:normaldnnf}).  
We do this by means of a \emph{bottleneck counting} argument \cite{Haken85}: we identify a set $B$ of gates such that each satisfying assignment of the DNNF has to pass through one of these gates, and argue that the number of assignments passing through an individual gate is small. Since the number of satisfying assignments is large, we conclude that the number of gates must be large as well.

More specifically, the argument goes as follows.
Let $F$ be a graph CNF whose underlying graph $G=(V,E)$ is an expander (of degree $d$), 
and let $D$ be a (nice) DNNF computing $F$. The set $B$ of gates is defined by taking, for every certificate $T$ of $D$, 
a gate $v_T$ in $T$ such that the number of variables in the subcircuit of $D$ rooted at $v_T$ 
is between $|V|/(d+1)$ and $|V|/2$ (Lemma~\ref{lemma:gates}).  
It follows from the expansion properties of $G$ (Lemma~\ref{lemma:point}) 
and the decomposability properties of $D$ (Theorem~\ref{th:treesagreeonedges} and Corollary~\ref{cor:large-point-large-matching}) 
that for all gates $v \in B$ there exists a subset $I_v$ of $V$ of size linear in $|V|$ 
such that, for all certificates $T$ of $D$ containing the gate~$v$, 
it holds that $I_v$ is contained in the variables of $T$. The satisfying assignments of $D$ passing through $v$ are those mapping all variables in $I_v$ to $1$.

Next, we show that for every $v \in B$, 
the fraction of satisfying assignments of $D$ containing $I_v$ 
is exponentially small in $|V|$ (Theorem~\ref{th:florent} and Corollary~\ref{cor:keyrazgon}). Moreover, the union (over gates $v \in B$) of satisfying assignments of $D$ mapping $I_v$ to $1$ coincides with the satisfying assignments of $D$.  
It follows that the size of $B$ is exponentially large in $|V|$ (Theorem~\ref{th:maintechnical}).

\section{Proof of the Lower Bound}

In this section, we prove our main result.  We introduce graph CNFs and nice DNNFs, 
prove a key property of nice DNNFs computing graph CNFs (Section~\ref{sect:graphcnfs}), 
and present our bottleneck argument (Section~\ref{sect:bottleneck}).  

\subsection{Graph CNFs and Nice DNNFs}\label{sect:graphcnfs}

If $G$ is a graph with at least two vertices and no isolated vertices, 
we view the edge set of $G$ as a CNF on the variables $\mathsf{vars}(E)=V$, namely,
\begin{equation}\label{eq:graphcnf}
\bigwedge_{\{x,y\} \in E}(x \vee y)\text{;} 
\end{equation}
we call a CNF of the from (\ref{eq:graphcnf}) a \emph{graph CNF}, 
and identify it with $E$.  

Note that the satisfying assignments of a graph CNF $E$ correspond 
to vertex covers of the underlying graph $G=(V,E)$ as follows: 
If $f$ is a satisfying assignment of $E$, then $\{x \in V \colon f(x)=1\} \in \VC(G)$, 
and if $V' \in \VC(G)$, then any assignment $f$ such that $V' \subseteq f$ satisfies~$E$.

An NNF is called \emph{negation free} 
if no input gate is labeled by a negated variable ($\neg x$), 
and \emph{constant free} if no input gate is labeled by a constant ($0$ or $1$).  
Note that, if $C$ is a negation and constant free NNF, 
then $\mathsf{vars}(C)$ coincides with the labels of the input gates of $C$.  
A \indegree\ $2$, constant free, and negation free DNNF is called \emph{nice}. 

The following statement implies that the minimum size of a nice DNNF computing 
a graph CNF (but indeed, more generally, any monotone Boolean function) is at most $2$ times as large as its DNNF size.

\begin{proposition}
\label{prop:normaldnnf}
Let $E$ be a graph CNF and let $D$ be a DNNF equivalent to~$E$.  
There exists a nice DNNF $D'$ equivalent to $D$ such that $\mathsf{size}(D') \leq 2 \cdot \mathsf{size}(D)$.
\end{proposition}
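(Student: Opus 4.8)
The goal is to convert an arbitrary DNNF $D$ equivalent to a graph CNF $E$ into a \emph{nice} DNNF $D'$ (fanin $2$, negation free, constant free) that is equivalent to $D$ and at most twice as large. The plan is to process the three defining properties of niceness in turn, each time producing an equivalent DNNF whose size is controlled, and verifying that decomposability is preserved at every step.

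First I would eliminate negated input literals and constant gates, exploiting monotonicity. Since $E$ is a monotone graph CNF, its satisfying assignments are upward closed, and by \cite[Lemma~3]{Krieger07} an optimal DNNF for a monotone function may be taken monotone; more elementarily, I would argue directly that every input gate labeled $\neg x$ or by the constant $0$ can be replaced by the constant $0$ without changing the computed function, and constants can then be propagated upward and absorbed (a $0$ feeding an $\wedge$ fixes that gate to $0$, a $1$ feeding an $\vee$ analogously, etc.), collapsing the circuit to one that is negation free and constant free. The key point to check is that this propagation and relabeling never introduces a shared variable between the two inputs of an $\wedge$-gate, so decomposability is maintained; since I am only deleting gates/wires and relabeling inputs to constants (which carry no variable), the variable set of each subcircuit can only shrink, so disjointness is preserved for free. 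This step does not increase the number of wires.

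The remaining task is to reduce the fanin to $2$. Here I would take each internal $\wedge$- or $\vee$-gate $v$ of fanin $i \geq 3$ with inputs $v_1,\ldots,v_i$ and replace it by a left-deep binary tree of $i-1$ gates of the same type ($\wedge$ or $\vee$) computing the same function. The wire count is the main quantity to bound: a gate of fanin $i$ originally contributes $i$ input wires, and the binary tree replacing it uses $2(i-1) = 2i-2 < 2i$ input wires, so the total number of wires at most doubles, giving $\mathsf{size}(D') \leq 2\,\mathsf{size}(D)$ as claimed. Equivalence is immediate because $\wedge$ and $\vee$ are associative. The step that requires genuine care is again decomposability: for an $\wedge$-gate the original inputs have \emph{pairwise} disjoint variable sets, so any binary sub-$\wedge$ of them combines a block $\{v_1,\ldots,v_j\}$ against $v_{j+1}$, and the union $\bigcup_{k \leq j}\mathsf{vars}(\mathsf{sub}(D,v_k))$ is disjoint from $\mathsf{vars}(\mathsf{sub}(D,v_{j+1}))$ precisely because of pairwise disjointness; hence every new $\wedge$-gate is decomposable. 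For $\vee$-gates decomposability imposes no constraint, so splitting is unconditionally safe.

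The main obstacle I anticipate is bookkeeping rather than conceptual: I must make sure the two transformations compose cleanly (first making the circuit negation/constant free, then binarizing), that the output gate $\outgate{D}$ is handled consistently, and—most importantly—that the disjointness argument for the binarized $\wedge$-gates is stated at the level of the correct subcircuits, since after binarization the subcircuit sinked at an intermediate gate is a \emph{union} of several original subcircuits rather than a single one. Once the invariant ``pairwise disjointness of the original children survives as disjointness along any binary split'' is made precise, the size bound and equivalence follow routinely, and so the proposition holds.
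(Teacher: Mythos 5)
Your overall architecture matches the paper's: the paper also chains three normalizations (fanin~$2$ via a binary-tree/caterpillar rewriting with the same wire count, $2(i-1)\leq 2i$ new wires per fanin-$i$ gate; negation removal via monotonicity, citing and reproving Krieger's Lemma~3; constant propagation), and your observations that variable sets only shrink and that pairwise disjointness of an $\wedge$-gate's children survives any binary split are exactly the decomposability checks the paper makes. Reordering the steps (negation/constant elimination before binarization, instead of after as in the paper) is harmless, since binarization preserves negation- and constant-freeness.

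However, your ``more elementary'' justification of the negation-removal step is wrong, and it is precisely the one step where decomposability and monotonicity do real work. Replacing an input gate labeled $\neg x$ by the constant $0$ does \emph{not} preserve the computed function, even for monotone functions computed by DNNFs: take $D = x \vee (\neg x \wedge y)$, which is decomposable (the $\wedge$-gate's children have variable sets $\{x\}$ and $\{y\}$) and equivalent to the graph CNF $x \vee y$; substituting $0$ for $\neg x$ yields $x \vee (0 \wedge y)$, i.e.\ the function $x$, which differs from $x \vee y$ at the assignment $x=0$, $y=1$. Substituting $0$ for an input can only shrink the set of satisfying assignments, so this direction of your argument is sound but the converse fails; monotonicity of $E$ does not rescue it, because monotonicity lets you raise assignments, not recover models killed by the substitution. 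The correct substitution is $\neg x \mapsto 1$, and its soundness is the genuinely non-trivial claim: given a satisfying assignment of the modified circuit, one takes a satisfying certificate through the relabeled leaf, uses decomposability to argue that this certificate contains no leaf labeled with the positive literal $x$, lowers the assignment at $x$ to get a model of $D$, and only then invokes monotonicity to conclude the original assignment also satisfies $D$. This is exactly Krieger's Lemma~3, which the paper reproves in full via certificates. Your fallback citation of that lemma would close the gap---but then your proof coincides with the paper's; the ``direct'' route you actually sketch, as written, proves a false statement.
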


\newcommand{\normalformsection}[0]{
\newcommand{\pfindegtwo}[0]{
\begin{proof}
Let $D$ be a DNNF.  An NNF $D'$ equivalent to $D$ and having \indegree\ $2$ 
is obtained by editing $D$ as follows, until no \node\ of \indegree\ larger than $2$ exists: 
Let $v$ be a $\wedge$-\node\ with \inarcs\ from \nodes\ $v_1,\ldots,v_i$ with $i>2$; 
delete the \arcs\ $(v_j,v)$ for $j \in \{2,\ldots,i\}$; 
create a fresh $\wedge$-\node\ $w$, 
and the \arcs\ $(w,v)$ and $(v_j,w)$ for $j \in \{2,\ldots,i\}$.  
The case where $v$ is an $\vee$-\node\ is similar.  

It is readily verified that $D'$ is decomposable.  Moreover, 
each \arc\ in $D$ is processed at most once (when it is an \inarc\ of a \node\ having \indegree\ larger than $2$) 
and it generates at most $2$ \arcs\ in $D'$, hence the size of $D'$ is at most twice the size of $D$.
\end{proof}}

We first reduce to the \indegree\ $2$ case.

\begin{proposition}
\label{prop:indeg-two}
Let $D$ be a DNNF.  There exists a DNNF $D'$ equivalent to $D$, 
having \indegree\ $2$, and such that $\mathsf{size}(D') \leq 2 \cdot \mathsf{size}(D)$.
\end{proposition}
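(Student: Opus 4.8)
The plan is to transform an arbitrary DNNF $D$ into an equivalent one of fanin $2$ while at most doubling the size. The key structural observation is that a single $\wedge$-\node\ (or $\vee$-\node) with \indegree\ $i$ computes the same Boolean function as a right-leaning binary tree of $i-1$ \nodes\ of the same type, each of \indegree\ $2$. So the construction is purely local: repeatedly pick any internal \node\ $v$ of \indegree\ $i > 2$ and rebalance its incoming \arcs\ into a chain of fanin-$2$ \nodes.

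Concretely, I would proceed as follows. First I would describe the rewriting step for a $\wedge$-\node\ $v$ with \inarcs\ from $v_1,\ldots,v_i$ where $i>2$: detach all but one incoming \arc\ (say keep $(v_1,v)$, delete $(v_2,v),\ldots,(v_i,v)$), introduce a fresh $\wedge$-\node\ $w$ with an \arc\ $(w,v)$, and reattach the detached \nodes\ by adding \arcs\ $(v_2,w),\ldots,(v_i,w)$. Now $v$ has \indegree\ $2$ (its inputs are $v_1$ and $w$), while $w$ has \indegree\ $i-1$. The $\vee$-\node\ case is identical with $\vee$ in place of $\wedge$. Iterating until no \node\ of \indegree\ exceeding $2$ remains terminates because each step strictly reduces the total excess fanin $\sum_{v}\max(0,\indegree(v)-2)$. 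I would call the resulting circuit $D'$.

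Three things then need checking. \textbf{Equivalence:} each rewriting step preserves the computed function, since $\wedge$ and $\vee$ are associative, so $v_1 \wedge w = v_1 \wedge (v_2 \wedge \cdots \wedge v_i) = v_1 \wedge \cdots \wedge v_i$, and dually for $\vee$; hence $D'(f)=D(f)$ for every assignment $f$. \textbf{Decomposability:} this is the point requiring the most care, but it is straightforward here. The variable set feeding into the new $\wedge$-\node\ $v$ splits as $\mathsf{vars}(\mathsf{sub}(D',v_1))$ and $\mathsf{vars}(\mathsf{sub}(D',w))=\bigcup_{j\geq 2}\mathsf{vars}(\mathsf{sub}(D',v_j))$; since $D$ was decomposable all the sets $\mathsf{vars}(\mathsf{sub}(D,v_j))$ were pairwise disjoint, so the two inputs of $v$ have disjoint variables and the inputs $v_2,\ldots,v_i$ of $w$ remain pairwise disjoint. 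The $\vee$-case imposes no variable condition at all. So $D'$ is a DNNF. \textbf{Size:} I would bound the wire count by charging new \arcs\ to old ones. Each original \arc\ $(v_j,v)$ is touched at most once as an input of an over-fanin \node, and when touched it yields at most two \arcs\ in $D'$ (the surviving \arc\ into $w$ plus an amortized share of the \arc\ $(w,v)$); hence $\mathsf{size}(D') \leq 2 \cdot \mathsf{size}(D)$.

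The main obstacle I anticipate is being precise about the \textbf{size} accounting once the rewriting is iterated, since a single \node\ of high fanin gets decomposed over several steps and fresh \nodes\ can themselves be rewritten. The clean way is to reason globally rather than step-by-step: observe that a fanin-$2$ tree replacing an original \node\ of \indegree\ $i$ has exactly $2(i-1)$ \arcs\ in place of the original $i$ \arcs, and $2(i-1) \leq 2i$ for $i \geq 1$; summing over all internal \nodes\ gives the factor-$2$ bound directly. I would therefore phrase the argument to emphasize that each original incoming \arc\ is replaced by at most two \arcs\ in the final circuit, avoiding any dependence on the order in which over-fanin \nodes\ are processed.
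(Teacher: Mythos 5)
Your proposal is correct and takes essentially the same approach as the paper: the identical local rewriting step (detach all but one incoming wire of an over-fanin gate, route the rest through a fresh gate of the same type, iterate), with equivalence by associativity, decomposability inherited from the disjointness of the original $\mathsf{vars}(\mathsf{sub}(D,v_j))$, and a factor-$2$ size bound. Your global accounting, replacing a fanin-$i$ gate's $i$ wires by the $2(i-1)\leq 2i$ wires of a binary chain, is in fact a slightly more careful justification of the bound than the paper's one-line claim that each wire is processed at most once and generates at most two wires in $D'$.
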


\pfindegtwo

Next, we reduce to the negation free case.  A Boolean function $F \colon \{0,1\}^{Y} \to \{0,1\}$ is called \emph{monotone} if 
for all assignments $f,f' \in Y \to \{0,1\}$ such that $f(x) \leq f'(x)$ for all $x \in Y$, it holds that $F(f) \leq F(f')$.

\newcommand{\pfnegfree}[0]{
\begin{proof}
Suppose $D$ contains a \node\ $u$ labeled with literal $\neg x$. 
Let~$D'$ be the DNNF on $\mathsf{vars}(D)$ obtained from~$D$ by relabeling $u$ with the constant $1$. 
We claim that an assignment satisfies $D$ if and only if it satisfies~$D'$. 
Let $f$ be a satisfying assignment of $D$. By Proposition~\ref{prop:mods-vs-trees} there is a certificate~$T$ 
of $D$ such that~$f$ satisfies $T$. We obtain a certificate $T'$ of $D'$ by relabeling $u$ with the constant $1$ (if $u$ appears in~$T$). 
It is straightforward to verify that $f$ is a satisfying assignment of $T'$. We apply Proposition~\ref{prop:mods-vs-trees} once more 
to conclude that $f$ must be a satisfying assignment of~$D'$. For the converse, let $f$ be a satisfying assignment of~$D'$, 
and let~$T'$ be a certificate of~$D'$ such that $f$ satisfies $T'$. If $T'$ does not contain the \node\ $u$ then $T'$ is also a certificate of $D$, 
and $f$ is a satisfying assignment of $D$ by Proposition~\ref{prop:mods-vs-trees}. Otherwise, we obtain a certificate~$T$ of $D$ from $T'$ 
by relabeling $u$ with the literal $\neg x$. Let $f'$ be the assignment such that $f(y) = f'(y)$ for 
all $y \in \mathsf{vars}(D) \setminus \{x\}$, and such that $f'(x) = 0$. Since $D$ is decomposable and $T$ contains the \node\ $u$ 
labeled with $\neg x$, no node of $T$ can be labeled with the literal~$x$. Thus $T'$ cannot contain such a \node\ either, 
and $f'$ satisfies $T'$. Since $\neg x$ evaluates to $1$ under~$f'$, the certificate $T$ is satisfied by $f'$ as well. 
By Proposition~\ref{prop:mods-vs-trees}, the assignment~$f'$ is a satisfying assignment of~$D$. 
Because the function computed by $D$ is monotone, we conclude that $f$ must satisfy $D$ as well. 
Clearly~$D$ and $D'$ have the same size and maximum \indegree. It follows that the desired negation free DNNF can be obtained by 
replacing every negative literal by the constant $1$ in the labels of $D$.
qed\end{proof}}

\begin{proposition}\cite[Lemma~3]{Krieger07}
\label{prop:neg-free}
Let $D$ be a DNNF computing a monotone Boolean function.  
There exists a DNNF $D'$ equivalent to $D$, 
negation free, and such that $\mathsf{size}(D') \leq \mathsf{size}(D)$. Moreover, $D'$ has the same \indegree\ as $D$.
\end{proposition}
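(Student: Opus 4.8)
The plan is to eliminate negative literals one at a time, changing only labels and never the underlying graph. Suppose $D$ contains an \source\ \node\ $u$ labeled with a literal $\neg x$, and let $D'$ be the NNF obtained from $D$ by relabeling $u$ with the constant $1$. Since only a label changes, $D'$ has the same underlying DAG as $D$, hence the same $\mathsf{size}$ and the same \indegree; and $D'$ is again decomposable, because decomposability depends only on the variable sets $\mathsf{vars}(\mathsf{sub}(\cdot,v))$ of the subcircuits feeding each $\wedge$-\node, and these sets can only shrink when an input label is replaced by a constant. The whole statement then follows by iterating this step until no negative literal remains (the computed function stays monotone throughout, since each step preserves equivalence). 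So the heart of the matter is to show that $D$ and $D'$ compute the same function.

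For the forward direction I would invoke the certificate characterization of Proposition~\ref{prop:mods-vs-trees}. Given $f \in \mathsf{sat}(D)$, pick a certificate $T$ of $D$ with $f \in \mathsf{sat}(T)$. If $u$ occurs in $T$ then $f(x)=0$, as otherwise $f$ would fail the leaf $\neg x$; relabeling $u$ to the constant $1$ turns $T$ into a certificate $T'$ of $D'$ that $f$ still satisfies. If $u$ does not occur in $T$, then $T$ is already a certificate of $D'$. In either case $f \in \mathsf{sat}(D')$ by Proposition~\ref{prop:mods-vs-trees}.

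The backward direction is where decomposability and monotonicity enter, and I expect it to be the main obstacle. Starting from $f \in \mathsf{sat}(D')$ with a satisfying certificate $T'$, the only delicate case is when $T'$ contains $u$: relabeling $u$ back to $\neg x$ gives a certificate $T$ of $D$, but $f$ need not satisfy the leaf $\neg x$, since possibly $f(x)=1$. The key fact, which I would establish from decomposability, is that no \node\ of $T$ other than $u$ can mention the variable $x$: two distinct leaves labeled by literals on $x$ would have a lowest common ancestor in the tree $T$ that must be a $\wedge$-\node\ (at an $\vee$-\node\ only one child is retained), and decomposability forces the variable sets below its children to be disjoint, contradicting that both leaves carry $x$. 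Consequently the assignment $f'$ obtained from $f$ by setting $f'(x)=0$ satisfies every leaf of $T$ — the leaf $u$ because $\neg x$ now evaluates to $1$, and all other leaves because they do not involve $x$ — so $f' \in \mathsf{sat}(D)$ by Proposition~\ref{prop:mods-vs-trees}. Finally, since $f' \leq f$ pointwise and $D$ computes a monotone function, we conclude $f \in \mathsf{sat}(D)$. This decomposability-based uniqueness of the $x$-leaf is the crux: it is exactly what licenses flipping $x$ to $0$ while keeping the rest of the certificate satisfied, and monotonicity then transfers satisfaction back up to $f$.
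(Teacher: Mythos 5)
Your proof is correct and takes essentially the same route as the paper's: relabel each $\neg x$ input by the constant $1$, transfer satisfying assignments in both directions via certificates (Proposition~\ref{prop:mods-vs-trees}), use decomposability to rule out any other occurrence of the variable $x$ in a certificate through $u$, flip $x$ to $0$, and invoke monotonicity to recover the original assignment. One cosmetic caveat: since $D$ may contain constant-labeled inputs, certificates need not be trees (Proposition~\ref{prop:cert-are-trees} assumes constant freeness), so your lowest-common-ancestor step should be phrased as the first common node of two directed paths from the two $x$-leaves to the output gate --- that node has two ingoing wires inside the certificate, hence is an $\wedge$-gate, and the same decomposability contradiction applies (the paper asserts this fact without detail).
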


\pfnegfree

Finally, we reduce to the constant free case.  Let $D$ be a DNNF not equivalent to $0$ or $1$.  
A constant free DNNF, denoted by $\mathsf{elimconst}(D)$, 
is obtained by editing $D$ as follows, until all \nodes\ labeled by a constant are deleted: 
Let $v$ be a $0$-\node, and let $v$ have \arcs\ to \nodes\ $v_1,\ldots,v_r$.  For all $j \in [r]$:  
if $v_j$ is a $\wedge$-\node, relabel $v_j$ by $0$, 
and delete all the \inarcs\ of $v_j$ 
(possibly creating some undesignated sink nodes in the underlying DAG); 
if $v_j$ is a $\vee$-\node, then delete the \arc\ $(v,v_j)$; 
relabel $v_j$ by $0$ if it becomes \indegree\ $0$; 
finally, delete $v$.  
The case where $v$ is a $1$-\node\ is similar.  Clearly, 

\begin{proposition}\label{prop:const-free}
Let $D$ be a non-constant DNNF and let $D'=\mathsf{elimconst}(D)$.  
Then, $D$ and $D'$ are equivalent, 
and $\mathsf{size}(D') \leq \mathsf{size}(D)$; 
moreover, the \indegree\ and negation freeness of $D$ are preserved in $D'$.
\end{proposition}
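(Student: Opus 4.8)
The plan is to prove the statement by induction on the number of editing steps in the definition of $\mathsf{elimconst}$. It suffices to show that a single step — processing one constant \node\ $v$ — preserves the computed function, does not increase the size or the \indegree, preserves negation freeness and decomposability, and strictly decreases the number of \nodes. The last property guarantees termination, since each step deletes the \node\ $v$ and never creates a new \node; the invariants then carry over from $D$ to $D'=\mathsf{elimconst}(D)$. The heart of the argument is the semantic soundness of constant propagation in a single step, and the remaining invariants are essentially bookkeeping.

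For equivalence, I would consider a step processing a $0$-\node\ $v$ with \arcs\ to \nodes\ $v_1,\ldots,v_r$, and fix an assignment $f$. First I would check that each direct successor $v_j$ has the same value before and after the step. If $v_j$ is a $\wedge$-\node, then in the original circuit $\mathsf{sub}(D,v_j)(f)=0$, since $v$ feeds the value $0$ into the conjunction; after relabeling $v_j$ by $0$ its value is again $0$. If $v_j$ is an $\vee$-\node, then deleting the \arc\ $(v,v_j)$ removes an input contributing $0$ to the disjunction, leaving the value unchanged whenever $v_j$ retains at least one \inarc; in the remaining subcase where $v_j$ loses its only \inarc, the original value $\mathsf{sub}(D,v_j)(f)$ equals $0$ and $v_j$ is relabeled by $0$, so the value is again preserved. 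I would then extend this to every \node\ by induction in topological order: \nodes\ that are not descendants of $v$ have an unchanged subcircuit $\mathsf{sub}(D,u)$ and hence an unchanged value, while for a proper descendant $u \notin \{v_1,\ldots,v_r\}$ the label and \inarcs\ of $u$ are untouched, so its value is determined by the inductively preserved values of its inputs. Applying this to $\outgate{D}$ yields $D(f)=D'(f)$ for all $f$. The case where $v$ is a $1$-\node\ is dual, exchanging the roles of $\wedge/\vee$ and $0/1$.

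The remaining invariants follow directly. A step only deletes \arcs\ and relabels \nodes\ by constants, so the number of \arcs\ is non-increasing, giving $\mathsf{size}(D') \leq \mathsf{size}(D)$; for the same reason no \node\ gains an \inarc, so the \indegree\ does not increase; and since relabeling only ever introduces the constants $0$ and $1$, no negated-variable label is created, so negation freeness is preserved. Decomposability is preserved because deleting \arcs\ and relabeling \nodes\ by constants can only shrink the variable set $\mathsf{vars}(\mathsf{sub}(D,u))$ of every \node\ $u$, so the disjointness condition at $\wedge$-\nodes\ continues to hold. Finally, since the computed function is preserved and $D$ is non-constant, the circuit stays non-constant throughout; in particular $\outgate{D}$ is never relabeled by a constant, so every constant \node\ that arises is distinct from $\outgate{D}$ and can be processed, and the procedure halts with a genuinely constant-free DNNF. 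The one point I expect to require care is the $\vee$-\node\ subcase in which the \indegree\ drops to $0$: there the convention that an empty disjunction evaluates to $0$ is precisely what makes the relabeling value-preserving, and it is this interaction that I would verify most carefully.
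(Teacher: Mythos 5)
Your proof is correct: the paper in fact states this proposition without any proof at all (prefacing it only with ``Clearly,''), and your induction on single editing steps---checking value preservation at each successor $v_j$, termination via strict decrease in the number of gates, and the bookkeeping for size, fanin, negation freeness, and decomposability---is precisely the routine verification the paper leaves implicit. The two points you rightly flag, the empty $\vee$-gate convention and the fact that non-constancy of the computed function prevents $\mathsf{output}(D)$ from ever being relabeled by a constant, are exactly the spots where care is needed, and your treatment of both is sound.
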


We conclude proving the statement.  

\begin{proof}[Proof of Proposition~\ref{prop:normaldnnf}]
Let $E$ be a graph CNF and let $D$ be a DNNF equivalent to $E$.  
By Proposition~\ref{prop:indeg-two}, there exists a DNNF $D_1$, equivalent to $D$, having \indegree\ $2$ 
whose size is at most twice the size of $D$.  
Since $E$ is a monotone Boolean function, by Proposition~\ref{prop:neg-free}, 
there exists a \indegree\ $2$ and negation free DNNF $D_2$, equivalent to $D_1$,  
whose size is at most the size of $D_1$.  
Since $E$ is a non constant Boolean function, by Proposition~\ref{prop:const-free}, 
there exists a \indegree\ $2$, negation free, and constant free DNNF $D_3$, equivalent to $D_2$, 
whose size is at most the size of $D_3$.  Let $D'=D_3$.  
Then, $D'$ is a \indegree\ $2$, constant free, and negation free DNNF, equivalent to $D$, 
whose size is at most twice the size of $D$.
\end{proof}}

\normalformsection

We also observe that, because of decomposability, certificates of nice DNNFs are tree shaped. 

\begin{proposition}
\label{prop:cert-are-trees}
Let $D$ be a (\indegree\ $2$) constant free DNNF and let $T \in \mathsf{cert}(D)$.  
The undirected graph underlying $T$ is a (binary) tree.  
Moreover, no two leaves of $T$ are labeled by the same variable.
\end{proposition}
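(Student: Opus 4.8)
The plan is to prove the tree property by a degree-counting argument and then read the leaf-label restriction off the same use of decomposability. I would begin by recording two easy facts. First, since $T$ is an NNF it has a unique \sink, so every \node\ of $T$ distinct from $\outgate{T}$ has at least one \arc\ leaving it in $T$. Second, since $D$ has \indegree\ $2$, in $T$ each $\wedge$-\node\ keeps all of its (at most two) children and each $\vee$-\node\ keeps exactly one; consequently the \nodes\ of in-degree $0$ in $T$ are precisely its \source\ \nodes, and every \node\ of $T$ has in-degree $0$, $1$, or $2$. The crux is the complementary claim that every \node\ of $T$ has \emph{at most} one \arc\ leaving it in $T$. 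Granting this, every non-output \node\ has exactly one outgoing \arc, so the number of \arcs\ of $T$ is one less than the number of \nodes; and since following the unique outgoing \arc\ from any \node\ reaches $\outgate{T}$ in finitely many steps (as $T$ is a finite DAG), the underlying undirected graph is connected. A connected graph with one fewer \arc\ than \nodes\ is a tree, and the in-degree bound makes it binary.

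The \textbf{main obstacle} is the at-most-one-outgoing-\arc\ claim, and this is exactly where decomposability is used. I would argue by contradiction: suppose a \node\ $v$ has two \arcs\ $(v,p_1)$ and $(v,p_2)$ in $T$ with $p_1 \neq p_2$. Following outgoing \arcs\ from $p_1$ and from $p_2$ toward the output, both directed paths terminate at $\outgate{T}$, so they first meet at a well-defined \node\ $g$; immediately below $g$ the two paths are still distinct, so $g$ receives two distinct incoming \arcs\ of $T$. A $\vee$-\node\ retains only one incoming \arc\ in a certificate, so $g$ must be a $\wedge$-\node, and the two paths pass through two distinct children $c_1 \neq c_2$ of $g$. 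Since $T \subseteq D$ contains directed paths from $v$ to $c_1$ and from $v$ to $c_2$, the \node\ $v$ lies in both $\sub(D,c_1)$ and $\sub(D,c_2)$; constant-freeness guarantees a variable $x \in \vars(\sub(D,v))$, and this $x$ then belongs to both $\vars(\sub(D,c_1))$ and $\vars(\sub(D,c_2))$, contradicting decomposability at~$g$.

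Finally, for the leaf labels I would repeat this argument one level down. The leaves of the tree $T$ are exactly its \source\ \nodes, which by constant-freeness are labeled by literals. If two distinct leaves carried the same variable, I would take their lowest common ancestor $g$ in $T$: the paths from $g$ to the two leaves diverge at $g$, so $g$ has two children in $T$ and hence, as above, must be a $\wedge$-\node. The shared variable would then appear in the variable sets of both child subcircuits of $g$ in $D$, again contradicting decomposability. Beyond the decomposability core, the only mild subtlety I anticipate is the bookkeeping that pins down the first common \node\ $g$ and verifies that the two paths enter it along genuinely distinct \arcs; the rest is routine.
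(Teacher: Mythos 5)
Your proof is correct and takes essentially the same route as the paper: both arguments reduce the tree property to finding a $\wedge$-gate that receives two arc-disjoint directed paths emanating from a common gate $v$, then use decomposability to force $\vars(\sub(D,v)) = \emptyset$, contradicting constant-freeness, and both handle the leaf labels via the least common ancestor being a $\wedge$-gate. Your outdegree-counting packaging (each gate of $T$ has at most one outgoing wire, plus an edge count) is a slightly more explicit rendering of the paper's ``no undirected cycle'' step --- and it has the minor merit of making connectivity of $T$ explicit, which the paper leaves implicit --- but it is not a different method.
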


\newcommand{\pfcertaretrees}[0]{
\begin{proof}
Assume that the undirected graph underlying $T$ is cyclic, 
so that in the underlying DAG there exist two distinct nodes $v$ and $w$ in $T$ 
and two arc disjoint directed paths from $v$ to $w$; 
in particular, $w$ has at least two ingoing arcs in $T$, 
hence by construction $w$ is a $\wedge$-\node\ in $D$.  By decomposability, 
no variables occur as labels of \source\ \nodes\ in $\mathsf{sub}(D,v)$, 
which is impossible since $D$ is constant free.  

We now consider that $T$ is rooted in $\mathsf{output}(D)$. Let $\ell_1$ and $\ell_2$ be two distinct leaves of $T$. 
Their least common ancestor $w$ in $T$ has two ingoing arcs, thus it is an $\land$-gate. 
By decomposability of $w$ in $D$, $\ell_1$ and $\ell_2$ are labeled by a different variable.
\end{proof}}

\pfcertaretrees

Let $D$ be a nice DNNF computing a graph CNF $E$ with underlying graph $G=(V,E)$, 
and let $\{x,x'\}$ be an edge (clause) in $E$ such that for some gate $v$ in $D$, 
the variables $x$ and $x'$ are, respectively, inside and outside the subcircuit of $D$ rooted at $v$.  
In this case, as we now show, all certificates of $D$ through $v$ set $x$ to $1$, 
or all certificates of $D$ through $v$ set $x'$ to $1$.

\begin{theorem}\label{th:treesagreeonedges}
Let $D$ be a nice DNNF, $v \in D$, 
$x \in \mathsf{vars}(\mathsf{sub}(D,v))$,  
and $x' \in \mathsf{vars}(D) \setminus \mathsf{vars}(\mathsf{sub}(D,v))$.  
If $\{x,x'\} \cap \mathsf{vars}(T) \neq \emptyset$ for all $T \in \mathsf{cert}(D)$, 
then at least one of the following two statements holds:
\begin{itemize}
\item $x \in \mathsf{vars}(T)$ for all $T \in \mathsf{cert}(D)$ such that $v \in T$.
\item $x' \in \mathsf{vars}(T)$ for all $T \in \mathsf{cert}(D)$ such that $v \in T$.
\end{itemize}
\end{theorem}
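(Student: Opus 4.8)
The plan is to argue by contradiction. Assume that neither conclusion holds: there is a certificate $T_1 \in \cert(D)$ with $v \in T_1$ and $x \notin \vars(T_1)$, and a certificate $T_2 \in \cert(D)$ with $v \in T_2$ and $x' \notin \vars(T_2)$. From these two certificates I will assemble a single certificate $T \in \cert(D)$ that contains neither $x$ nor $x'$; this contradicts the hypothesis that $\{x,x'\} \cap \vars(T) \neq \emptyset$ for every $T \in \cert(D)$, and the theorem follows.

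The construction is a surgery on certificates, made possible by Proposition~\ref{prop:cert-are-trees}, which tells us that certificates of nice DNNFs are trees rooted at $\outgate{D}$ whose leaves are the (variable-labelled) input gates. Since $v \in T_1$ and $v \in T_2$, each $T_i$ splits at $v$ into the subtree $S_i = \sub(T_i,v)$ hanging below $v$, which is a certificate of $\sub(D,v)$ with $\outgate{\sub(D,v)} = v$, and the remaining part $A_i$ consisting of $v$ together with all gates and wires of $T_i$ that do not lie strictly below $v$ (the path from $v$ to $\outgate{D}$ and all its side branches), so that $A_i \cap S_i = \{v\}$ and $A_i \cup S_i = T_i$. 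I then set $T := A_2 \cup S_1$, obtained from $T_2$ by detaching $S_2$ and grafting $S_1$ in its place at $v$. The first step is to verify that $T$ is again a certificate of $D$. Since the defining conditions for certificates are local to each gate, this reduces to a case distinction on whether a gate of $T$ lies in $A_2$, equals $v$, or lies in $S_1$: above $v$ the wiring is exactly that of the certificate $T_2$, at and below $v$ it is exactly that of the certificate $T_1$ restricted to $\sub(D,v)$, and because $\sub(D,v)$ contains every wire feeding into its gates no condition is disturbed at the seam $v$. This verification is routine but should be spelled out.

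The crux is then to show $x \notin \vars(T)$ and $x' \notin \vars(T)$. For $x'$ this is immediate: $x' \notin \vars(\sub(D,v)) \supseteq \vars(S_1)$ by hypothesis, and $x' \notin \vars(A_2) \subseteq \vars(T_2)$ by the choice of $T_2$. For $x$ we have $x \notin \vars(S_1) \subseteq \vars(T_1)$ by the choice of $T_1$, so the only danger, and the main obstacle, is that $x$ occurs in $A_2$. I claim decomposability rules this out. Suppose a leaf $\ell$ of $A_2$ were labelled $x$. As $\ell$ lies above $v$ it is not in $\sub(T_2,v)$, so the least common ancestor $w$ of $\ell$ and $v$ in the tree $T_2$ has two children and is therefore an $\wedge$-gate; let $c_1,c_2$ be the children of $w$ leading to $v$ and to $\ell$ respectively. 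Then $x \in \vars(\sub(D,v)) \subseteq \vars(\sub(D,c_1))$ since $v$ lies in $\sub(D,c_1)$, whereas $x \in \vars(\sub(D,c_2))$ since $\ell$ lies in $\sub(D,c_2)$, contradicting decomposability of $D$ at $w$. Hence $x \notin \vars(A_2)$, so $x \notin \vars(T)$. Thus $\{x,x'\} \cap \vars(T) = \emptyset$, the desired contradiction.
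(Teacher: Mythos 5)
Your proof is correct and takes essentially the same route as the paper's: the paper likewise takes two certificates through $v$ witnessing the failure of both conclusions and splices the part of one above $v$ onto the subtree of the other below $v$, obtaining a certificate $S$ with $\{x,x'\}\cap\vars(S)=\emptyset$, contradicting the hypothesis. The only difference is in presentation: where you rule out an occurrence of $x$ above $v$ by an explicit least-common-ancestor/decomposability argument, the paper invokes Proposition~\ref{prop:cert-are-trees} (no two leaves of a certificate share a variable, proved by exactly that LCA argument) together with the claim $x\in\vars(\sub(T,v))$, so your inline version in fact spells out a step the paper states rather tersely.
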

\begin{proof}
Let $\{T,T'\} \subseteq \mathsf{cert}(D)$ be such that $v \in T \cap T'$.  
As $D$ is constant free, by Proposition~\ref{prop:cert-are-trees}, 
the underlying graphs of the certificates of $D$ are trees.  By hypothesis, $\{x,x'\} \cap \mathsf{vars}(T) \neq \emptyset$ 
and $\{x,x'\} \cap \mathsf{vars}(T') \neq \emptyset$.   
We want to show that $x \in \mathsf{vars}(T) \cap \mathsf{vars}(T')$ 
or $x' \in \mathsf{vars}(T) \cap \mathsf{vars}(T')$.  

Assume towards a contradiction that $x \in \mathsf{vars}(T) \setminus \mathsf{vars}(T')$ 
and $x' \in \mathsf{vars}(T') \setminus \mathsf{vars}(T)$; 
the case where $x' \in \mathsf{vars}(T) \setminus \mathsf{vars}(T')$ 
and $x \in \mathsf{vars}(T') \setminus \mathsf{vars}(T)$ is symmetric.  

First, we observe that $x \not\in \mathsf{vars}(T) \setminus \mathsf{vars}(\mathsf{sub}(T,v))$ since, by Proposition~\ref{prop:cert-are-trees}, the leaves of $T$ are labeled with distinct variables and $x \in \mathsf{vars}(\mathsf{sub}(T,v))$.

Second, since $\mathsf{vars}(\mathsf{sub}(T',v)) \subseteq \mathsf{vars}(\mathsf{sub}(D,v))$, 
and $x' \not\in \mathsf{vars}(\mathsf{sub}(D,v))$ by hypothesis, 
it holds that $x' \not\in \mathsf{vars}(\mathsf{sub}(T',v))$.  
Therefore, $$\{x,x'\} \cap \mathsf{vars}(T) \setminus \mathsf{vars}(\mathsf{sub}(T,v))=\emptyset \textup{ and } \{x,x'\} \cap \mathsf{vars}(\mathsf{sub}(T',v))=\emptyset\text{.}$$  Now, 
the tree $S$ obtained by replacing in $T$ the subtree rooted at $v$ by the subtree rooted at $v$ in $T'$ 
is a certificate of $D$; moreover, $\{x,x'\} \cap \mathsf{vars}(S)=\emptyset$, 
contradicting the hypothesis that all certificates of $D$ have a nonempty intersection with $\{x,x'\}$.
\end{proof}

Therefore, if $G$ contains a matching $M$ such that each edge in the matching 
satisfies the condition of the previous statement, namely 
there is a gate $v$ in $D$ such that each edge in $M$ has 
one vertex inside and the other vertex outside the subcircuit of $D$ rooted at $v$, 
then all certificates of $D$ through $v$ agree on setting $|M|$ variables to $1$.

\begin{corollary}\label{cor:large-point-large-matching}
Let $E$ be a graph CNF whose underlying graph is $G=(V,E)$ 
and let $D$ be a nice DNNF equivalent to $E$.  Let $v$ be a gate in $D$ 
and $M$ be a matching in $G$ between $\vars(\sub(D,v))$ and $\vars(D) \setminus \vars(\sub(D,v))$. There exists $I_v \subseteq V$ such that $|I_v| = |M|$ and for all $T \in \cert(D)$, if $v \in T$ then $I_v \subseteq \vars(T)$.
\end{corollary}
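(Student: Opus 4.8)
The plan is to apply Theorem~\ref{th:treesagreeonedges} to each edge of the matching $M$ separately and to collect one ``forced'' endpoint per edge into the set $I_v$.

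First I would verify that every edge $e = \{x,x'\} \in M$ satisfies the hypothesis of Theorem~\ref{th:treesagreeonedges}, namely that $\{x,x'\} \cap \vars(T) \neq \emptyset$ for all $T \in \cert(D)$. The key observation is that $\{x,x'\}$ is an edge of $G$, so $(x \vee x')$ is one of the clauses of the graph CNF $E$. Since $D$ is nice, every certificate $T$ is a tree whose leaves are labeled by positive variables (Proposition~\ref{prop:cert-are-trees}), and hence the satisfying assignments of $T$ are exactly those setting every variable in $\vars(T)$ to $1$. If some certificate $T$ avoided both $x$ and $x'$, I could take an assignment in $\sat(T)$ and additionally set $x = x' = 0$; by Proposition~\ref{prop:mods-vs-trees} this assignment would satisfy $D$ while falsifying the clause $(x \vee x')$, contradicting $D \equiv E$. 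Thus the hypothesis holds for each matching edge.

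Next, for each $e = \{x,x'\} \in M$---where $x \in \vars(\sub(D,v))$ and $x' \in \vars(D) \setminus \vars(\sub(D,v))$ by the definition of the matching---Theorem~\ref{th:treesagreeonedges} yields a distinguished endpoint $z_e \in \{x,x'\}$ such that $z_e \in \vars(T)$ for every $T \in \cert(D)$ with $v \in T$. I would then set $I_v = \{ z_e : e \in M \}$. Because $M$ is a matching, its edges are pairwise vertex-disjoint, so the chosen endpoints $z_e$ are distinct and $|I_v| = |M|$; and by construction $I_v \subseteq \vars(T)$ whenever $v \in T$, which is exactly the required conclusion.

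I expect the main obstacle to be the first step: cleanly arguing that no certificate can avoid both endpoints of a matching edge. This relies on the characterization of satisfying assignments of a certificate of a nice DNNF (all leaf-variables forced to $1$) combined with the clause semantics of the graph CNF, and it is precisely what licenses the per-edge application of Theorem~\ref{th:treesagreeonedges}. The remaining steps are essentially bookkeeping---one invocation of that theorem per edge, together with a disjointness argument supplied by the matching---so the bulk of the care goes into making the hypothesis check rigorous.
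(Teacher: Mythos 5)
Your proposal is correct and follows essentially the same route as the paper's proof: apply Theorem~\ref{th:treesagreeonedges} once per matching edge to extract a forced endpoint, collect these into $I_v$, and use vertex-disjointness of the matching to get $|I_v| = |M|$. The only difference is that you spell out in detail why $\{x,x'\} \cap \vars(T) \neq \emptyset$ for every certificate $T$ (via the positive-leaf characterization of certificates of nice DNNFs and Proposition~\ref{prop:mods-vs-trees}), a step the paper asserts in one line directly from the fact that $x \vee x'$ is a clause of $E$; your elaboration is accurate and harmless.
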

\begin{proof}
  Let $e = \{x,x'\} \in M$ with $x \in \vars(\sub(D,v))$ and $x' \in \vars(D) \setminus \vars(\sub(D,v))$. 
Since $x \vee x'$ is a clause in the CNF $E$, for all $T \in \cert(D)$, either $x \in \vars(T)$ or $x' \in \vars(T)$. 
Thus by Theorem~\ref{th:treesagreeonedges}, either $x \in \vars(T)$ for all $T \in \cert(D)$ such that $v \in T$ or $x' \in \vars(T)$ for all $T \in \cert(D)$ such that $v \in T$. Let $x_e$ be the vertex of $e$ that is in every $T \in \cert(D)$ such that $v \in T$. We choose $I_v = \{x_e \mid e \in M\}$.

  By construction, it is clear that for all $T \in \cert(D)$ such that $v \in T$, we have $I_v \subseteq \vars(T)$. Moreover, 
since $M$ is a matching, for $e,e' \in M$, if $e \neq e'$ then $e \cap e' = \emptyset$ and thus $x_e \neq x_{e'}$, that is $|I_v| = |M|$.
\end{proof}

\subsection{Bottleneck Argument}\label{sect:bottleneck}

We are now ready to set up our bottleneck argument.  In the sequel, 
$D$ is a nice DNNF computing a graph CNF $E$ whose underlying graph is 
an expander $G=(V,E)$.  We define a subset $B$ of gates of $D$ (Lemma~\ref{lemma:gates}) 
and, for each gate $v\in B$, a subset $I_v$ of $V$ (Lemma~\ref{lemma:point}) 
in such a way that the fraction of vertex covers containing $I_v$ 
is exponentially small in $|V|$ (Corollary~\ref{cor:keyrazgon}), 
hence $B$ is exponentially large in $|V|$ (Theorem~\ref{th:maintechnical}).  

\subsubsection{Finding the Bottleneck Gates.}

We define the bottleneck $B \subseteq D$ as follows.  
For every certificate $T$ of $D$ we find (in a greedy fashion) a node $v_T$ in $T$ 
such that the subcircuit of $D$ rooted at $v_T$ has a large but not too large number of variables, 
and we put $v_T$ into $B$.

\begin{lemma}\label{lemma:gate}
Let $E$ be a graph CNF whose underlying graph $G=(V,E)$ is connected and has degree $d$ ($d \geq 3$).  
Let $D$ be a nice DNNF equivalent to $E$ and let $T \in \mathsf{cert}(D)$.  
There exists a gate $v_T \in T$ 
such that 
\begin{equation}\label{eq:largepoint}
|V|/(d+1) \leq |\mathsf{vars}(\mathsf{sub}(D,v_T))| \leq |V|/2\text{.} 
\end{equation}
\end{lemma}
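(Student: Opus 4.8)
My plan is to reduce the whole statement to a counting argument on the leaves of the certificate tree $T$ and to locate $v_T$ via Proposition~\ref{prop:tree}. Two structural facts drive this. Since $D$ is nice it is in particular constant free, so Proposition~\ref{prop:cert-are-trees} tells me that $T$ is a binary tree whose leaves carry pairwise distinct variables. Hence for every gate $v \in T$ the number of leaves of the subtree of $T$ rooted at $v$ equals $|\vars(\sub(T,v))|$, and in particular the number of leaves of $T$ itself equals $|\vars(T)|$. This turns the statement I want into one about leaf counts of subtrees.

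Next I would show that $T$ has many leaves, because $\vars(T)$ must be a vertex cover of $G$. Let $f$ be the assignment setting the variables in $\vars(T)$ to $1$ and all others to $0$. Since $T$ is negation and constant free, every leaf of $T$ is a positive literal satisfied by $f$, so $f \in \sat(T)$; Proposition~\ref{prop:mods-vs-trees} then gives $f \in \sat(D)$, and as $D$ is equivalent to the graph CNF $E$ the set $\{x \colon f(x)=1\} = \vars(T)$ is a vertex cover of $G$. Because $G$ is connected of degree $d$, Proposition~\ref{prop:vcbounds} yields $|\vars(T)| \geq |V|/(d+1)$, i.e.\ $T$ has at least $|V|/(d+1)$ leaves.

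Now I would apply Proposition~\ref{prop:tree} with $\ell = |V|/(d+1)$ to obtain a gate $v_T \in T$ whose subtree has between $\ell$ and $2\ell$ leaves. Here the hypothesis $d \geq 3$ is used in exactly one place: it guarantees $2\ell = 2|V|/(d+1) \leq |V|/2$, so the leaf count of $\sub(T,v_T)$, which equals $|\vars(\sub(T,v_T))|$, falls in the target window $[|V|/(d+1),\,|V|/2]$.

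The step I expect to be the real obstacle is replacing $\vars(\sub(T,v_T))$ by $\vars(\sub(D,v_T))$, which is what the statement asks for. The inclusion $\vars(\sub(T,v_T)) \subseteq \vars(\sub(D,v_T))$ gives the lower bound $|V|/(d+1) \leq |\vars(\sub(D,v_T))|$ for free, but the upper bound is delicate: inside $\sub(D,v_T)$ an $\vee$-gate may contribute variables through the child that $T$ does not select, so $\vars(\sub(D,v_T))$ can strictly exceed $\vars(\sub(T,v_T))$. To obtain the upper bound I would instead run the search directly on $w(v) := |\vars(\sub(D,v))|$, which is non-increasing along each root-to-leaf path of $T$ and, by decomposability, additive at every $\wedge$-gate. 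Descending to the heavier child at each $\wedge$-gate then loses at most a factor of two, and since $(d+1)/2 \geq 2$ for $d \geq 3$ such a step can never jump from above $|V|/2$ to below $|V|/(d+1)$. The one dangerous case is an $\vee$-gate, which contributes a single child to $T$ and at which $w$ may fall steeply; showing that $w$ cannot leap clear over the whole window $[|V|/(d+1),\,|V|/2]$ at such a gate — so that the descent necessarily halts inside it — is the crux, and it is here that decomposability has to be combined with the fact that the variables retained along $T$ always form a vertex cover.
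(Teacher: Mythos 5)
Your first three paragraphs are, step for step, the paper's own proof: the paper also shows $\vars(T)$ is a vertex cover via the assignment setting exactly $\vars(T)$ to $1$ (Propositions~\ref{prop:mods-vs-trees} and~\ref{prop:vcbounds}), and also invokes Proposition~\ref{prop:tree} with $\ell = |V|/(d+1)$, using $d+1 \geq 4$ for $2\ell \leq |V|/2$. The difference is in how the two texts end. The paper states the conclusion of Proposition~\ref{prop:tree} directly as $|V|/(d+1) \leq |\vars(\sub(D,v_T))| \leq 2|V|/(d+1)$, i.e., it silently identifies the leaf count of the subtree of $T$ rooted at $v_T$ (which by Proposition~\ref{prop:cert-are-trees} equals $|\vars(\sub(T,v_T))|$) with $|\vars(\sub(D,v_T))|$. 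You decline to make that identification, observe correctly that the inclusion $\vars(\sub(T,v_T)) \subseteq \vars(\sub(D,v_T))$ yields only the left inequality of~(\ref{eq:largepoint}), and that the containment can be strict whenever an unselected branch of an $\vee$-gate below $v_T$ carries extra variables --- and then you stop, declaring the right inequality an unresolved ``crux.'' So you have isolated a step the paper performs implicitly, which is a genuine observation; but isolating a step is not closing it, and as submitted your proposal does not prove the lemma. The missing half is not cosmetic: Lemma~\ref{lemma:point} applies the expansion inequality~(\ref{eq:expansion}) to $S = \vars(\sub(D,v))$, which is only licensed when $|S| \leq |V|/2$, and Theorem~\ref{th:treesagreeonedges} is formulated in terms of membership in $\vars(\sub(D,v))$, so one cannot quietly substitute $\vars(\sub(T,v_T))$ downstream either.

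Your sketched repair is analyzed correctly as far as it goes --- $w(v) = |\vars(\sub(D,v))|$ is non-increasing along $T$, additive at $\wedge$-gates by decomposability, and the heavier $\wedge$-child (which lies in $T$, since both children of an $\wedge$-gate of a certificate do) keeps $w > |V|/4 \geq |V|/(d+1)$ --- but the $\vee$-gate case, where only the selected child is in $T$ and $w$ may in principle plunge past the whole window, is exactly where you offer a hope rather than an argument. Note that your own descent idea can be completed if one first normalizes $D$: replacing each child $C$ of an $\vee$-gate that misses a variable $x$ of a sibling by $C \vee (C \wedge x)$ preserves equivalence, decomposability, negation- and constant-freeness, and forces every $\vee$-gate's children to have the same variable set, so that $w$ is unchanged at $\vee$-steps of a certificate and at most halves at $\wedge$-steps; stopping the first time $w \leq |V|/2$ then lands in the window without even needing Proposition~\ref{prop:tree}. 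This smoothing costs a factor of $O(|V|)$ in size, which is harmless for a $2^{\Omega(|V|)}$ bound but would slightly degrade the clean constant in Theorem~\ref{th:maintechnical}; the paper instead proceeds as if the leaf count of $\sub(T,v_T)$ bounds $|\vars(\sub(D,v_T))|$ from above, without justifying or flagging that step. In short: your paragraphs one through three are the paper's proof; your paragraph four raises a legitimate demand for justification of its last line, but since you neither justify it nor carry out the alternative descent, the proposal is incomplete as a proof of the statement.
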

\begin{proof}
We claim that $\frac{|V|}{(d+1)} \leq |\mathsf{vars}(T)|\text{.}$
Indeed, let $f$ be the assignment defined by $f(v)=1$ if and only if $v \in \mathsf{vars}(T)$.  
Then $f$ satisfies $D$ by Proposition~\ref{prop:mods-vs-trees}.  
Since $D$ computes $E$, we have that $\vars(T)=\{ v \colon f(v)=1 \}$ is a vertex cover of $G$. Thus by Proposition~\ref{prop:vcbounds}, $|\vars(T)| \geq |V|/(d+1)$.  

By Proposition~\ref{prop:tree}, with $\ell = |V|/(d+1)$, there exists a vertex $v_T$ in $T$ such that $|V|/(d+1) \leq |\mathsf{vars}(\mathsf{sub}(D,v_T))| \leq 2|V|/(d+1) \leq |V|/2$ where the last inequality comes from the fact that $d+1 \geq 4$.
\end{proof}

\begin{lemma}\label{lemma:gates}
Let $E$ be a graph CNF whose underlying graph $G=(V,E)$ is connected and has degree $d$ ($d \geq 3$).  
Let $D$ be a nice DNNF equivalent to $E$.  There exist $B \subseteq D$ such that:
\begin{enumerate}
\item $|V|/(d+1) \leq |\mathsf{vars}(\mathsf{sub}(D,v))| \leq |V|/2$, for all $v \in B$.
\item For all $T \in \mathsf{cert}(D)$ there exists $v \in B$ such that $v \in T$.
\end{enumerate}
\end{lemma}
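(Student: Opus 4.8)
The plan is to obtain $B$ simply by collecting, over all certificates, the single bottleneck gate already produced by Lemma~\ref{lemma:gate}. Concretely, for every $T \in \mathsf{cert}(D)$ I would invoke Lemma~\ref{lemma:gate} to fix a gate $v_T \in T$ satisfying the size bound~(\ref{eq:largepoint}), and then define $B = \{ v_T \colon T \in \mathsf{cert}(D) \}$. Since each $v_T$ is a gate of the sub-DAG $T$, and the gates of any certificate are by definition gates of $D$, the set $B$ is indeed a subset of the gates of $D$, so it is a legitimate candidate for the claimed $B$.

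With this definition both required properties fall out immediately. Property~(1) holds because every element of $B$ equals $v_T$ for some certificate $T$, and Lemma~\ref{lemma:gate} guarantees precisely $|V|/(d+1) \leq |\mathsf{vars}(\mathsf{sub}(D,v_T))| \leq |V|/2$ for that gate. Property~(2) holds by construction: given an arbitrary $T \in \mathsf{cert}(D)$, the witnessing gate is $v = v_T$, which lies in $T$ and belongs to $B$ by definition, so we may take $v = v_T$.

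I do not expect any genuine obstacle at this step. All of the combinatorial work---combining the vertex-cover lower bound of Proposition~\ref{prop:vcbounds} with the subtree-balancing argument of Proposition~\ref{prop:tree}---has already been carried out inside Lemma~\ref{lemma:gate}; the present statement merely packages that per-certificate existence result into a single gate set $B$ that is met by every certificate. The only mild subtlety worth recording is the well-definedness remark above, namely that each chosen $v_T$, though selected as a node of the certificate $T$, is genuinely a gate of $D$, so that $B$ really is a set of gates of $D$ rather than of the individual certificates.
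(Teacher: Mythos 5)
Your proposal is correct and matches the paper's own proof exactly: the paper likewise sets $B = \{v_T \mid T \in \mathsf{cert}(D)\}$ with $v_T$ taken from Lemma~\ref{lemma:gate}, and both properties follow immediately by construction. Your extra remark that each $v_T$ is genuinely a gate of $D$ is a harmless elaboration the paper leaves implicit.
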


\begin{proof}
We simply choose $B = \{v_T \mid T \in \cert(D)\}$ where $v_T$ is the vertex of $T$ from Lemma~\ref{lemma:gate}.
\end{proof}

\subsubsection{Mapping the Vertex Covers.} For each $v \in B$, we find a large matching in $G$ between variables inside and outside 
the subcircuit rooted at $v$, 
and then use Corollary~\ref{cor:large-point-large-matching} to derive 
a large set $I_v \subseteq V$ such that for all certificates $T$ through $v$ it holds that $I_v \subseteq \mathsf{vars}(T)$.

Recall that a \emph{matching} in a graph $G$ is a subset $M$ of the edges 
such that $\{u,v\} \cap \{u',v'\} \neq \emptyset$ 
for every two distinct edges $\{u,v\}$ and $\{u',v'\}$ in $M$.  
For disjoint subsets $V'$ and $V''$ of the vertices of $G$, 
a matching $M$ in $G$ is said \emph{between $V'$ and $V''$} 
if every edge in $M$ intersects both $V'$ and $V''$.  

\begin{lemma}\label{lemma:point}
Let $E$ be a graph CNF whose underlying graph $G=(V,E)$ is a $(c,d)$-expander ($d \geq 3$, $c>0$), 
let $D$ be a nice DNNF equivalent to $E$, 
and let $v \in D$ such that $|V|/(d+1) \leq \vars(\sub(D,v)) \leq |V|/2$. There exists $I_v \subseteq V$ such that:
\begin{enumerate}
\item For all $T \in \mathsf{cert}(D)$ such that $v \in
  T$ it holds that $I_v \subseteq \mathsf{vars}(T)$.
\item $|I_v| \geq c|V|/(2 d^2)$.
\end{enumerate}
\end{lemma}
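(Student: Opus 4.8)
The plan is to obtain $I_v$ directly from Corollary~\ref{cor:large-point-large-matching}, so the whole task reduces to exhibiting a sufficiently large matching in $G$ between $S := \vars(\sub(D,v))$ and $\vars(D) \setminus S$. Since $D$ is equivalent to the graph CNF $E$ on variable set $V$, and $E$ genuinely depends on every variable (each vertex occurs in some clause), we have $\vars(D) = V$ and hence $\vars(D) \setminus S = V \setminus S$. Given such a matching $M$, Corollary~\ref{cor:large-point-large-matching} immediately yields a set $I_v$ with $|I_v| = |M|$ satisfying property (1), so it remains only to guarantee $|M| \geq c|V|/(2d^2)$, which will give property (2).

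To build the matching I would exploit the expansion of $G$ together with its bounded degree. By hypothesis $|S| \leq |V|/2$, so the expansion inequality~(\ref{eq:expansion}) gives $|\neigh_{S,G}| \geq c|S|$. Every vertex of $\neigh_{S,G}$ lies in $V \setminus S$ and has a neighbour in $S$, so the crossing edges between $S$ and $\neigh_{S,G}$ form a bipartite graph $H$ in which every vertex of $\neigh_{S,G}$ has degree at least one.

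The key step is to extract from $H$ a matching of size at least $|\neigh_{S,G}|/d$. For this I would take any maximal matching $M$ of $H$ and let $A \subseteq S$ be the set of its $S$-endpoints, so $|A| = |M|$. By maximality, every unmatched vertex of $\neigh_{S,G}$ must have all of its (nonempty set of) neighbours already in $A$, and every matched vertex of $\neigh_{S,G}$ is adjacent to its partner in $A$; hence all of $\neigh_{S,G}$ is contained in the neighbourhood of $A$. Since $G$ has degree $d$, that neighbourhood has size at most $\sum_{a \in A} d(a) \leq d|A| = d|M|$, so $|\neigh_{S,G}| \leq d|M|$, i.e. $|M| \geq |\neigh_{S,G}|/d \geq c|S|/d$.

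Finally I would combine the bounds, using $|S| \geq |V|/(d+1)$ together with $d+1 \leq 2d$:
$$|I_v| = |M| \geq \frac{c|S|}{d} \geq \frac{c|V|}{d(d+1)} \geq \frac{c|V|}{2d^2}\text{.}$$
I expect the only non-routine point to be the matching extraction: expansion alone guarantees many endpoints on the far side of $S$, and converting this neighbourhood bound into a matching of comparable size is exactly where the bounded degree $d$ enters, via the maximal-matching counting argument above.
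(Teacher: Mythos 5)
Your proof is correct and takes essentially the same route as the paper: both reduce the lemma to Corollary~\ref{cor:large-point-large-matching} by using the expansion inequality~(\ref{eq:expansion}) together with the degree bound $d$ to produce a matching of size at least $c|V|/(2d^2)$ between $S=\vars(\sub(D,v))$ and its complement, and then the arithmetic $|M| \geq c|S|/d \geq c|V|/(d(d+1)) \geq c|V|/(2d^2)$ is the same. The only (cosmetic) difference is in how the matching is shown to be large: the paper builds it by greedy deletion, noting that each chosen edge removes at most $d$ vertices from each side of the cut, whereas you take an arbitrary maximal matching and bound $|\neigh_{S,G}| \leq d|M|$ by a covering count over the matched $S$-endpoints --- an equivalent, if anything slightly cleaner, argument that also avoids the paper's $\min\{1,c\}$ detour.
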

\begin{proof}
  The idea is to construct a matching between $S = \vars(\sub(D,v))$ and $V \setminus S$ of size at least 
$c|V|/(2 d^2)$ and then apply Corollary~\ref{cor:large-point-large-matching}. 
Since $|V|/(d+1) \leq |S| \leq |V|/2$ and $G$ is a $(c,d)$-expander, by (\ref{eq:expansion}) and (\ref{eq:largepoint}) 
we have that 
$$|\neigh_S| \geq c|S| \geq c|V|/(d+1)\text{.}$$
We construct a matching $M$ between between $S$ 
and $V \setminus S$ in $G$ as follows.  
Pick an edge $\{v,w\} \in E$ with $v \in S$ and $w \in \neigh_S \subseteq V \setminus S$; 
add $\{v,w\}$ to $M$; 
delete $v$ from $S$, 
$w$ from $\neigh_S$, 
the vertices in $S$ with no neighbors in $\neigh_S$ after the deletion of $w$, 
and the vertices in $\neigh_S$ with no neighbors in $S$ after the deletion of $v$; 
iterate on the updated $S$ and $\neigh_S$, until either $S=\emptyset$ or 
$\neigh_S=\emptyset$.  At each step, we delete 
at most $d$ vertices in $S$ and 
at most $d$ vertices in $\neigh_S$.  

Hence, we iterate for at least 
$$s \geq \min \left\{ \frac{|S|}{d}, \frac{|\neigh_S|}{d} \right\} \geq 
\frac{\min\{1,c\}}{d(d+1)}|V| \geq \frac{c|V|}{2 d^2}$$
steps ($d \geq 3$, $c \leq 1$).  So we have that 
$|M| \geq s \geq c|V|/(2 d^2)$, 
and we are done. Now, applying Corollary~\ref{cor:large-point-large-matching} on $v$ and the matching $M$ yields the result.
\end{proof}

\subsubsection{Proving the Lower Bound.}  We conclude proving 
that for every $v \in B$ the fraction of vertex covers of $G$ containing $I_v$ 
is exponentially small (Corollary~\ref{cor:keyrazgon}).  
On the other hand, by construction, every vertex cover of $G$ 
contains a set $I_v$ for some $v \in B$, so that the union (over $v \in B$) 
of the vertex covers of $G$ containing $I_v$ coincides with the vertex covers of $G$; 
hence $B$ is exponentially large (Theorem~\ref{th:maintechnical}).  

Let $G = (V,E)$ be a graph and let $S \subseteq V$.  
We denote by $\VC(G,S)$ the set of vertex covers of $G$ containing~$S$.

\begin{theorem}
\label{th:florent}
Let $G = (V,E)$ be a graph, $S \subseteq V$, and $s \in S$.  Then,
$${|\VC(G, S)|} \leq \left (\frac{2^{d(s)}}{1+2^{d(s)}}\right){|\VC(G,S\setminus \{s\})|}\text{.}$$
\end{theorem}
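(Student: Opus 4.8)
The plan is to reduce the stated inequality to a cleaner injection bound and then finish with elementary counting. First I would observe that any vertex cover containing $S$ also contains $S \setminus \{s\}$, so $\VC(G,S) \subseteq \VC(G,S\setminus\{s\})$; writing $B = \VC(G,S\setminus\{s\}) \setminus \VC(G,S)$ for the covers that contain $S\setminus\{s\}$ but omit $s$, this gives the disjoint decomposition $|\VC(G,S\setminus\{s\})| = |\VC(G,S)| + |B|$. A short algebraic manipulation shows the theorem is equivalent to the single inequality $|\VC(G,S)| \le 2^{d(s)}\,|B|$, so it suffices to bound the covers containing $s$ by those omitting $s$.

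To compare these two families I would exhibit an injection that turns a cover containing $s$ into a cover omitting $s$, recording just enough side information to recover the original. The natural operation is to delete $s$ and restore the covering condition by inserting all of $s$'s neighbours, so I set
$$f(\calC) = \big(\calC \cap \neigh_s,\ (\calC \setminus \{s\}) \cup \neigh_s\big),$$
viewing $f$ as a map into $\calP(\neigh_s) \times B$. The first coordinate is trivially a subset of $\neigh_s$. For the second coordinate $\calD = (\calC \setminus \{s\}) \cup \neigh_s$ I would check it is a vertex cover by a case split on edges: an edge not incident to $s$ is covered in $\calC$ by a vertex other than $s$, which survives in $\calD$, while an edge $\{s,t\}$ has $t \in \neigh_s \subseteq \calD$. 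Since the open neighbourhood excludes $s$ we get $s \notin \calD$, and $S\setminus\{s\} \subseteq \calC\setminus\{s\} \subseteq \calD$, so indeed $\calD \in B$.

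The heart of the argument is injectivity. Suppose $f(\calC) = f(\calC')$ with both covers containing $s$. Then the first coordinates agree, $\calC \cap \neigh_s = \calC' \cap \neigh_s$, and taking the union of the (equal) second coordinates with $\{s\}$ yields $\calC \cup \neigh_s = \calC' \cup \neigh_s$, using $s \in \calC \cap \calC'$. Membership of an arbitrary vertex $u$ in $\calC$ is then pinned down: if $u \in \neigh_s$ it is decided by the common intersection, and if $u \notin \neigh_s$ it is decided by the common union; either way $u \in \calC \iff u \in \calC'$, so $\calC = \calC'$.

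Injectivity immediately gives $|\VC(G,S)| \le |\calP(\neigh_s)|\cdot|B| = 2^{d(s)}\,|B|$, and substituting $|B| = |\VC(G,S\setminus\{s\})| - |\VC(G,S)|$ and solving the resulting linear inequality produces exactly the factor $\frac{2^{d(s)}}{1+2^{d(s)}}$. I do not expect a genuine obstacle here: the only real idea is the delete-$s$-insert-neighbours map, after which verifying that its output is a cover and that it is injective is routine set-theoretic bookkeeping.
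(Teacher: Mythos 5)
Your proposal is correct and takes essentially the same route as the paper's own proof: the identical delete-$s$-insert-neighbours map $\calC \mapsto \big(\calC \cap \neigh_s,\ (\calC\setminus\{s\})\cup \neigh_s\big)$, viewed as an injection into $\calP(\neigh_s) \times \big(\VC(G,S\setminus\{s\}) \setminus \VC(G,S)\big)$, followed by the same counting and algebraic rearrangement yielding the factor $\frac{2^{d(s)}}{1+2^{d(s)}}$. Your only (harmless) addition is spelling out the final set-theoretic step---recovering membership of each vertex from the common intersection with $\neigh_s$ and the common union with $\neigh_s$---which the paper states without elaboration.
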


\pfkeyflorent

\begin{corollary}
\label{cor:keyrazgon}
Let $G = (V,E)$ be a graph of degree $d$ and let $S \subseteq V$.  Then, 
$$|\VC(G, S)| \leq \left(\frac{2^{d}}{1+2^{d}}\right)^{|S|}|\VC(G)| \text{.}$$
\end{corollary}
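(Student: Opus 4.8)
The plan is to iterate Theorem~\ref{th:florent} once for every element of $S$, peeling vertices off one at a time until the prescribed set is exhausted. Because each application of the theorem keeps the graph $G$ fixed and only shrinks $S$, the natural vehicle is an induction on $|S|$. To make the induction go through cleanly, I would first establish the sharper, degree-sensitive inequality
$$|\VC(G,S)| \leq \bigg(\prod_{s \in S} \frac{2^{d(s)}}{1+2^{d(s)}}\bigg)|\VC(G)|,$$
in which the factor attached to each vertex records its own degree $d(s)$ rather than the global bound $d$; the stated corollary then follows by a single uniform estimate applied at the end.

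For the induction itself, the base case $S=\emptyset$ is immediate, since the empty product equals $1$ and $\VC(G,\emptyset)=\VC(G)$. For the inductive step, I would fix any $t \in S$ and apply Theorem~\ref{th:florent} with $s=t$ to obtain
$$|\VC(G,S)| \leq \frac{2^{d(t)}}{1+2^{d(t)}}\,|\VC(G,S\setminus\{t\})|.$$
Since $|S\setminus\{t\}| < |S|$, the induction hypothesis bounds $|\VC(G,S\setminus\{t\})|$ by the product over $S\setminus\{t\}$ times $|\VC(G)|$; multiplying back in the factor for $t$ reassembles the full product over $S$, which completes the step.

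Finally, to pass from the degree-sensitive product to the clean bound, I would observe that the map $k \mapsto 2^{k}/(1+2^{k}) = (1+2^{-k})^{-1}$ is increasing, so $d(s) \leq d$ gives $2^{d(s)}/(1+2^{d(s)}) \leq 2^{d}/(1+2^{d})$ for every $s \in S$. Replacing each of the $|S|$ factors by this common upper bound turns the product into $\big(2^{d}/(1+2^{d})\big)^{|S|}$, which is exactly the claimed inequality.

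I do not anticipate a genuine obstacle: Theorem~\ref{th:florent} already carries the entire combinatorial weight, and what remains is bookkeeping. The only point that demands a little care is the decision to prove the degree-refined product form first; inducting directly on the uniform bound $2^{d}/(1+2^{d})$ would be awkward, because Theorem~\ref{th:florent} delivers the vertex-specific factor $2^{d(t)}/(1+2^{d(t)})$, and one would then have to interleave the monotonicity estimate with each inductive step rather than applying it once, cleanly, at the very end.
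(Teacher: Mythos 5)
Your proposal is correct and follows essentially the same route as the paper: the paper's proof likewise first establishes the degree-sensitive bound $|\VC(G,S)| \leq \big(\prod_{s \in S} \frac{2^{d(s)}}{1+2^{d(s)}}\big)|\VC(G)|$ by induction on $|S|$, applying Theorem~\ref{th:florent} to a chosen $t \in S$ in the inductive step, and then deduces the stated inequality from $d(s) \leq d$ for all $s \in S$. There is nothing to correct.
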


\pfflorent

\begin{theorem}
\label{th:maintechnical}
Let $G=(V,E)$ be a $(c,d)$-expander such that $|V|\geq 2$. Then
$$\mathsf{DNNF}(E) \geq 2^{g(c,d)\mathsf{size}(E)-1}\text{,}$$
where $g(c,d)=\frac{c \cdot f(d)}{6 d^3}$ and $f(d)=\log_2(1+2^{-d})>0$.
\end{theorem}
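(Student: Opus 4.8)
The plan is to assemble the pieces developed in the previous sections into a bottleneck counting argument. We are given a $(c,d)$-expander $G=(V,E)$ with $|V| \geq 2$, and we want a lower bound on $\mathsf{DNNF}(E)$. By Proposition~\ref{prop:normaldnnf}, it suffices to lower bound the size of a nice DNNF $D$ computing $E$ and then divide by $2$ (this is where the $-1$ in the exponent will come from, since $\mathsf{size}(D) \geq \frac{1}{2}\mathsf{DNNF}(E)$ up to the relevant counting, or rather $\mathsf{DNNF}(E) \geq \frac{1}{2}\mathsf{size}(D)$ gives the extra factor). So fix a nice DNNF $D$ equivalent to $E$, and let $B \subseteq D$ be the set of bottleneck gates from Lemma~\ref{lemma:gates}. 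Since $\mathsf{size}(D)$ counts wires and each gate of $D$ has at least one incident wire, we have $|B| \leq \mathsf{size}(D)$; the goal is therefore to show $|B|$ is exponentially large in $|V|$, since $\mathsf{size}(E) = |E| = \Theta(|V|)$ for a $d$-regular graph.

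The core counting step proceeds as follows. First I would observe that every vertex cover of $G$ arises from a satisfying assignment of $D$, which by Proposition~\ref{prop:mods-vs-trees} is witnessed by some certificate $T \in \mathsf{cert}(D)$; by property~(ii) of Lemma~\ref{lemma:gates}, $T$ contains some gate $v \in B$, and by Lemma~\ref{lemma:point} we then have $I_v \subseteq \mathsf{vars}(T)$, so the corresponding vertex cover contains $I_v$. Hence every vertex cover of $G$ lies in $\VC(G, I_v)$ for some $v \in B$, which yields the union bound
\begin{equation*}
|\VC(G)| \leq \sum_{v \in B} |\VC(G, I_v)| \leq |B| \cdot \max_{v \in B} |\VC(G, I_v)|\text{.}
\end{equation*}
Next, for each $v \in B$, Corollary~\ref{cor:keyrazgon} with $S = I_v$ gives $|\VC(G, I_v)| \leq \big(\tfrac{2^d}{1+2^d}\big)^{|I_v|} |\VC(G)|$, and Lemma~\ref{lemma:point}~(ii) gives $|I_v| \geq c|V|/(2d^2)$. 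Since $\tfrac{2^d}{1+2^d} < 1$, raising it to a larger power only decreases the bound, so $|\VC(G, I_v)| \leq \big(\tfrac{2^d}{1+2^d}\big)^{c|V|/(2d^2)} |\VC(G)|$ uniformly over $v \in B$.

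Combining these two inequalities and cancelling $|\VC(G)|$ (which is nonzero, as $V$ itself is a vertex cover) yields
\begin{equation*}
|B| \geq \left(\frac{1+2^d}{2^d}\right)^{c|V|/(2d^2)} = \left(1 + 2^{-d}\right)^{c|V|/(2d^2)} = 2^{\,f(d) \cdot c|V|/(2d^2)}\text{,}
\end{equation*}
using $f(d) = \log_2(1+2^{-d})$. The final bookkeeping step translates $|V|$ into $\mathsf{size}(E)$: since $G$ has degree $d$, the number of edges satisfies $\mathsf{size}(E) = |E| \leq d|V|/2$, hence $|V| \geq 2\,\mathsf{size}(E)/d$, giving $|B| \geq 2^{f(d)\,c\,\mathsf{size}(E)/(d^3)}$. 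Tracking the constant down to the stated $g(c,d) = c f(d)/(6d^3)$ absorbs the degree-versus-wire accounting and the factor-of-two loss from passing to the nice DNNF via Proposition~\ref{prop:normaldnnf}, finally producing $\mathsf{DNNF}(E) \geq \tfrac{1}{2}|B| \geq 2^{g(c,d)\mathsf{size}(E)-1}$.

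The step I expect to be most delicate is the final constant-chasing: reconciling the clean exponent $f(d)c|V|/(2d^2)$ coming from the counting argument with the target $g(c,d) = cf(d)/(6d^3)$ expressed in terms of $\mathsf{size}(E)$ rather than $|V|$. This requires care in bounding $\mathsf{size}(E)$ above by a multiple of $|V|$ (using $d$-regularity) and in folding the factor-$2$ size blowup from Proposition~\ref{prop:normaldnnf} into the $-1$ term of the exponent, so the chosen denominator $6d^3$ must be loose enough to dominate all these slacks simultaneously while keeping $g(c,d) > 0$.
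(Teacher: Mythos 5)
Your proposal is correct and follows essentially the same route as the paper's proof: bottleneck gates $B$ from Lemma~\ref{lemma:gates}, the sets $I_v$ from Lemma~\ref{lemma:point}, the union bound $\VC(G)=\bigcup_{v\in B}\VC(G,I_v)$ via Proposition~\ref{prop:mods-vs-trees}, the counting bound of Corollary~\ref{cor:keyrazgon}, and the factor-$2$/$-1$ loss from Proposition~\ref{prop:normaldnnf}. The only slip is in the bookkeeping you yourself flagged: $\mathsf{size}(E)$ is the number of wires, not the number of clauses, so the correct translation is $\mathsf{size}(E)\leq 3|E|\leq 3d|V|$ (not $|V|\geq 2\,\mathsf{size}(E)/d$), and with this count the denominator $6d^3=2d^2\cdot 3d$ is exact rather than loose.
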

\begin{proof}
Let $D$ be  a nice DNNF equivalent to $E$. By Proposition~\ref{prop:normaldnnf}, we can assume that $\mathsf{size}(D) \leq 2 \cdot\mathsf{DNNF}(E)$.   

Let $B \subseteq D$ be the set of gates from Lemma~\ref{lemma:gates}.  Let $v \in B$. By construction, $|V|/(d+1) \leq |\vars(\sub(D,v))| \leq |V|/2$. Thus by Lemma~\ref{lemma:point}, there exists $I_v$ such that for all $T \in \cert(D)$ such that $v \in T$, we have $I_v \subseteq \vars(T)$ 
and $|I_v| \geq h(c,d)|V|$ where $h(c,d)=c/(2 d^2)$.  By Corollary~\ref{cor:keyrazgon}, 
$$|\VC(G,I_v)| \leq \left(\frac{2^d}{1+2^d}\right)^{|I_v|}|\VC(G)|\text{.}$$
As $f(d)=-\log_2(\frac{2^d}{1+2^d})$, we have that, for all $v \in B$,
$$|\VC(G,I_v)| \leq 2^{-f(d)h(c,d)|V|}|\VC(G)|\text{.}$$

We claim that  
$\VC(G) = \bigcup_{v \in B}\VC(G,I_v)$. For the nontrivial containment, let $C \in \VC(G)$ and, 
by Proposition~\ref{prop:mods-vs-trees}, 
let $T \in \mathsf{cert}(D)$ be such that $\mathsf{vars}(T) \subseteq C$.  
By Lemma~\ref{lemma:gates}$(ii)$, let $v \in B$ be such that $v \in T$.  
Then $I_v \subseteq \mathsf{vars}(T)$ by Lemma~\ref{lemma:point}$(i)$, 
so that $I_v \subseteq C$, that is, $C \in \VC(G,I_v)$.  Therefore,
$$|\VC(G)| \leq \sum_{v \in B}|\VC(G,I_v)| \leq 2^{-f(d)h(c,d)|V|}|\VC(G)| \cdot |B|\text{,}$$
from which $|B| \geq 2^{f(d)h(c,d)|V|}$.

Now observe that $|E| \leq d|V|$, because $G$ has degree $d$.  Thus, 
the CNF $E$ has at most $d|V|$ clauses, each of at most $2$ literals, so that $d|V|+2d|V|=3d|V| \geq \mathsf{size}(E)$. 
Since $\mathsf{size}(D) \geq |B|$ and $g(c,d) = \frac{f(d)h(c,d)}{3d}$, we finally have
$$\mathsf{DNNF}(E) \geq \mathsf{size}(D)/2 \geq 2^{g(c,d)\mathsf{size}(E)-1}\text{.}$$\end{proof}

Our main result follows from the previous theorem.

\begin{theorem}
\label{thm:mainnontechnical}
There exist a class $\mathcal{C}$ of CNF formulas and a constant $c > 0$ such that $\mathsf{DNNF}(F) \geq 2^{c \cdot \mathsf{size}(F)}$ for each formula $F \in \mathcal{C}$.  Indeed, $\mathcal{C}$ is a class of read $3$ times monotone $2$-CNFs.
\end{theorem}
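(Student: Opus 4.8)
The plan is to instantiate Theorem~\ref{th:maintechnical} at a fixed small degree and to package the resulting family of graph CNFs into the desired class $\mathcal{C}$; the only real work is to confirm the syntactic restrictions and to absorb the additive $-1$ in the exponent into a slightly smaller multiplicative constant. The heavy lifting has already been done in Theorem~\ref{th:maintechnical}, so the present statement is essentially a clean instantiation of it.

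First I would fix the degree $d = 3$. By Theorem~\ref{th:expex} there is a constant $c > 0$ and an infinite sequence of $(c,3)$-expanders $G_i = (V_i, E_i)$ with $|V_i| \to \infty$. Each such $G_i$ is connected, hence (for $|V_i| \geq 2$) has no isolated vertices, so the associated graph CNF $E_i$ is well defined. I would take $\mathcal{C}$ to consist of these graph CNFs, discarding the finitely many whose index is too small, as explained below; since $|V_i| \to \infty$ the remaining family is still infinite.

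Next I would verify the syntactic claims. By construction a graph CNF is a conjunction of clauses $x \vee y$, one per edge, so it is a monotone $2$-CNF: every clause contains exactly two literals and no literal is negated. For the read-$3$ bound, observe that a variable $x$ occurs only in the clauses corresponding to the edges incident to the vertex $x$; since $G_i$ has degree $3$, there are at most three such clauses, so the number of wires leaving input gates labelled $x$ is at most $3$. Hence each $E_i$ is a read $3$ times monotone $2$-CNF, as claimed.

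Finally I would derive the lower bound. Set $g_0 = g(c,3) = \frac{c\,f(3)}{6\cdot 3^3}$, which is a fixed positive constant since $f(3) = \log_2(1 + 2^{-3}) > 0$. Applying Theorem~\ref{th:maintechnical} to each $G_i$ gives $\mathsf{DNNF}(E_i) \geq 2^{g_0\,\mathsf{size}(E_i) - 1}$. Choosing $c' = g_0/2$, the inequality $g_0\,\mathsf{size}(E_i) - 1 \geq c'\,\mathsf{size}(E_i)$ holds whenever $\mathsf{size}(E_i) \geq 2/g_0$; because $|V_i| \to \infty$ forces $\mathsf{size}(E_i) \to \infty$ (the size is at least the number of edges, which is at least $|V_i|-1$ for a connected graph), this fails for only finitely many $i$, and those are the formulas I discard from $\mathcal{C}$. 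For the remaining infinite family we obtain $\mathsf{DNNF}(F) \geq 2^{c'\,\mathsf{size}(F)}$ for every $F \in \mathcal{C}$, which is the desired bound. I expect no genuine obstacle here: the difficulty is entirely contained in Theorem~\ref{th:maintechnical}, and the only subtlety in this final step is the harmless restriction to large formulas needed to swallow the additive constant.
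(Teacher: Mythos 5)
Your proposal is correct and takes essentially the same route as the paper's own proof: both instantiate Theorem~\ref{th:maintechnical} on the degree-$3$ expander family given by Theorem~\ref{th:expex}, check that each graph CNF is a read $3$ times monotone $2$-CNF via the degree bound, and absorb the additive $-1$ in the exponent by restricting to sufficiently large formulas. Your explicit choice $c' = g_0/2$ with the threshold $\mathsf{size}(E_i) \geq 2/g_0$ is merely a more concrete rendering of the paper's step of picking $c$ and a starting index $j$ along an infinite subfamily of strictly increasing sizes.
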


\newcommand{\pfmainnontechnical}[0]{
\begin{proof}
By Theorem~\ref{th:expex}, there exists a family $\mathcal{G}=\{G_i=(V_i,E_i) \colon i \in \mathbb{N}\}$ of $(e,3)$-expander 
graphs such that $|V_i| \geq 2$ for all $i \in \mathbb{N}$ and $|V_i| \to \infty$ as $i \to \infty$ ($e>0$).  Every graph 
in $\mathcal{G}$ is connected; in particular, it does not contain isolated vertices.  Therefore $E_i$ is a CNF 
for every $i \in \mathbb{N}$, and indeed $E_i$ is a read $3$ times monotone $2$-CNF.  

Since $|V_i| \to \infty$ as $i \to \infty$ and each graph in $\mathcal{G}$ satisfies (\ref{eq:expansion}), 
there exists an infinite subset $I \subseteq \mathbb{N}$ such that $\mathsf{size}(E_i)< \mathsf{size}(E_{i+1})$ for all $i \in I$.  
Choose $c>0$ and $j \in I$ large enough such that $g(e,3) \cdot \size{E_{j}}-1 \geq c \cdot \size{E_{j}}$, 
where $g(\cdot,\cdot)$ is as in the statement of Theorem~\ref{th:maintechnical}.  
It follows from Theorem~\ref{th:maintechnical} that 
$\mathsf{DNNF}(E_{j}) \geq 2^{g(e,3) \cdot \size{E_{j}}-1} \geq 2^{c \cdot \size{E_{j}}}\text{;}$
we take $\mathcal{C}=\{ E_i \colon i \in I \text{ and } i \geq j \}$, and the statement is proved.  
\end{proof}}

\pfmainnontechnical

\newcommand{\corollariessection}{
\section{Corollaries}\label{sect:cors}
 
In this section we will prove the corollaries of Theorem~\ref{thm:mainnontechnical} we sketched in the introduction. 

\medskip \noindent Let $F$ be a CNF. We say that a clause $C$ is \emph{entailed} by $F$ if every 
satisfying assignment of $F$ also satisfies $C$. We say that a clause $C'$ \emph{subsumes} $C$, if $C'\subseteq C$. 
A CNF $F$ is in prime implicates form (short PI) if every clause that is entailed by $F$ is subsumed by 
a clause that appears in $F$ and no clause in $F$ is subsumed by another.  Note that CNFs in PI form can express all Boolean functions but it is known that encoding in 
PI form may generally be exponentially bigger than general CNF~\cite{DarwicheM02}. 

\newcommand{\pfpi}[0]{
\begin{proof}
 Let $F$ be a monotone 2-CNF formula. We first note that trivially no clause in a 2-CNF subsumes another.
 
 Now let $C$ be a clause entailed by $F$ and assume by way of contradiction that $C$ is not subsumed by any clause of $F$, that is, 
every clause in $F$ contains a positive literal not in $C$. Let $C'$ be the clause we get from $C$ by deleting all negative literals. We claim that $C'$ is entailed by $F$. To see this, consider a satisfying assignment~$f$ of~$F$. Let $f'$ be the assignment we get from $f$ by setting the variables that are negated in $C$ to $1$. Since $F$ is monotone, this is still a satisfying assignment of $F$ and thus of $C$. Consequently, $C$ is satisfied by one of its positive literals in $f'$ and thus in $f$. Thus $f$ satisfies $C'$ and it follows that $F$ entails $C'$.
 
 Now let $f$ be the assignment that sets all variables in $C'$ to $0$ and all other variables to $1$. Since $C$ and thus also $C'$ is not subsumed by any clause of $F$, the assignment $f$ satisfies $F$. But by construction $f$ does not satisfy $C'$ which is a contradiction.
\end{proof}}

\begin{lemma}\label{lem:PI}
Every monotone 2-CNF formula is in PI form.
\end{lemma}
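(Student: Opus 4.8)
The plan is to prove Lemma~\ref{lem:PI}, namely that every monotone $2$-CNF formula $F$ is in PI form. Recall that PI form requires two conditions: first, no clause of $F$ is subsumed by another clause of $F$; second, every clause entailed by $F$ is subsumed by some clause of $F$. I would dispatch the first condition immediately by observing that a $2$-CNF consists of clauses of the form $x$ or $x \vee y$, and since $F$ is a graph CNF with no isolated vertices all clauses have exactly two distinct literals; a clause $x \vee y$ can only be subsumed by a strictly smaller clause (a single literal), and no such clauses are present. More generally, two distinct $2$-clauses cannot subsume one another unless they are equal, so the non-subsumption condition holds trivially.

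The substance of the argument is the second condition. Let $C$ be an arbitrary clause entailed by $F$, and suppose toward a contradiction that no clause of $F$ subsumes $C$. The key first step is to reduce to the case where $C$ is a positive clause. I would define $C'$ to be the clause obtained from $C$ by deleting all negative literals, and argue that $C'$ is still entailed by $F$. This uses monotonicity crucially: given any satisfying assignment $f$ of $F$, flip to $1$ every variable that appears negated in $C$, obtaining $f'$; by monotonicity $f'$ still satisfies $F$, hence satisfies the entailed clause $C$, and since all of $C$'s negative literals are false under $f'$ by construction, $C$ must be satisfied through one of its positive literals, which also appears in $C'$. Tracing this back shows $f$ satisfies $C'$, so $F \models C'$.

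The final step exhibits the contradiction. I would construct the specific assignment $f$ that sets every variable occurring in $C'$ to $0$ and every other variable to $1$. The claim is that $f$ satisfies $F$: since by hypothesis no clause of $F$ subsumes $C$ (and hence none subsumes the positive clause $C'$), every clause $x \vee y$ of $F$ contains at least one variable not in $C'$, which is therefore set to $1$ by $f$. Thus $f \models F$. But $f$ sets all variables of $C'$ to $0$, so $f \not\models C'$, contradicting $F \models C'$.

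The main obstacle, and the only place where genuine care is needed, is the monotonicity reduction from $C$ to $C'$: one must verify that flipping the negated variables of $C$ to $1$ preserves satisfaction of $F$ (immediate from monotonicity) and that this forces a positive literal of $C$ to be satisfied, so that the entailment transfers to $C'$. The subsumption bookkeeping and the explicit falsifying assignment are routine once the clauses of a graph CNF are recognized to be exactly the positive two-literal clauses, so I expect no serious difficulty there.
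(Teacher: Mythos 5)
Your proposal is correct and takes essentially the same route as the paper's own proof: the same monotonicity argument reducing an entailed clause $C$ to its positive part $C'$ (flip the variables negated in $C$ to $1$, use monotonicity to keep $F$ satisfied, and conclude a positive literal of $C$ must fire), and the same final falsifying assignment setting the variables of $C'$ to $0$ and all others to $1$. The only difference is cosmetic: you discharge the non-subsumption condition by appealing to the two-distinct-literal structure of the clauses, where the paper simply notes it is trivial.
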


\pfpi

Remember that the formulas of Theorem~\ref{thm:mainnontechnical} are monotone 2-CNF formulas.  
We directly get the promised separation from Lemma~\ref{lem:PI} and Theorem~\ref{thm:mainnontechnical}.

\begin{corollary}\label{cor:pi}
There exist a class $\mathcal{C}$ of CNFs in PI form and a constant $c>0$ such that, 
for every formula $F$ in $\mathcal{C}$, every DNNF equivalent to $F$ has size at least $2^{c \cdot \size{F}}$.
\end{corollary}
It follows that $\mathrm{L_1}$ can be exponentially more succinct than $\mathrm{L_2}$ for any two representation languages $\mathrm{L_1} \supseteq \mathrm{PI}$ and $\mathrm{L_2} \subseteq \mathrm{DNNF}$. In particular, this holds if $\mathrm{L_1} \in \{\mathrm{PI}, \mathrm{CNF}, \mathrm{NNF}\}$ and $\mathrm{L_2} \in \{\mathrm{d}\textup{-}\mathrm{DNNF}, \mathrm{DNNF}\}$. Here, $\mathrm{d}\textup{-}\mathrm{DNNF}$ denotes the language of \emph{deterministic} DNNFs, that is, DNNFs where subcircuits leading into a $\lor$-gate never simultaneously evaluate to $1$. This answers several questions concerning the relative succinctness of common representation languages~\cite{DarwicheM02}.

We also observe that DNNFs are not closed under negation.

\newcommand{\pfdnfs}[0]{
\begin{proof}
Let $\mathcal{C}$ be the class of 2-CNFs from Theorem~\ref{thm:mainnontechnical}. Let $\mathcal{D}$ be the class of 2-DNFs 
we get by negating the formulas in $\mathcal{C}$. Now negating $\mathcal{D}$ gives the class $\mathcal{C}$ 
again, for which we have the lower bound from Theorem~\ref{thm:mainnontechnical}.
\end{proof}}

\begin{lemma}\label{lemma:dnfs}
There exist a class $\mathcal{D}$ of 2-DNF formulas and a constant $c>0$ such that, 
for every formula $D$ in $\mathcal{D}$, every DNNF equivalent to $\neg D$ has size at least $2^{c \cdot \size{D}}$.  
\end{lemma}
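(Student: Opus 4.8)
The plan is to exploit the lower bound already established in Theorem~\ref{thm:mainnontechnical} by simply negating the witnessing CNF formulas. Recall that Theorem~\ref{thm:mainnontechnical} provides a class $\mathcal{C}$ of monotone $2$-CNF formulas and a constant $c>0$ with $\mathsf{DNNF}(F) \geq 2^{c \cdot \size{F}}$ for every $F \in \mathcal{C}$. By De Morgan's laws, negating a $2$-CNF $F = \bigwedge_i (\ell_{i,1} \vee \ell_{i,2})$ yields the $2$-DNF $\neg F = \bigvee_i (\neg\ell_{i,1} \wedge \neg\ell_{i,2})$. I would therefore take $\mathcal{D} = \{ \neg F : F \in \mathcal{C} \}$ as the witnessing class of $2$-DNFs.

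The key observation is that negation is an involution: for each $D = \neg F \in \mathcal{D}$ we have $\neg D = F$. Hence every DNNF equivalent to $\neg D$ is a DNNF equivalent to $F$, and by the choice of $\mathcal{C}$ its size is at least $2^{c \cdot \size{F}}$. It only remains to re-express this bound in terms of $\size{D}$ rather than $\size{F}$.

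The one point requiring care is the relationship between $\size{D}$ and $\size{F}$, since the hypothesis bounds the DNNF size in terms of $\size{F}$ while the statement is phrased in terms of $\size{D}$. Because $D = \neg F$ arises from $F$ by the structural De Morgan transformation that flips each $\vee$ to $\wedge$, negates each input literal, and swaps the top conjunction for a disjunction, the two formulas have the same number of wires up to a small constant factor. Consequently $\size{F} = \Theta(\size{D})$, and after absorbing this factor into the constant (shrinking $c$ if necessary) the bound $2^{c \cdot \size{F}}$ becomes a bound of the form $2^{c' \cdot \size{D}}$ for some $c'>0$, as required.

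I do not expect any genuine obstacle: the argument rests entirely on the facts that DNNF size depends only on the computed Boolean function, that $\neg\neg F = F$, and that De Morgan negation preserves formula size up to a constant. All the combinatorial substance of the separation lives in Theorem~\ref{thm:mainnontechnical}; this lemma merely reinterprets it to exhibit the failure of closure under negation, since the $2$-DNF $D$ itself admits a trivial DNNF of linear size whereas its negation is strongly exponential.
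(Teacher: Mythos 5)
Your proposal is correct and follows exactly the paper's proof: take $\mathcal{D}$ to be the De Morgan negations of the $2$-CNFs from Theorem~\ref{thm:mainnontechnical} and use that negation is an involution. Your extra care about $\size{D}$ versus $\size{F}$ is sound but even simpler than you suggest---the De Morgan dual merely relabels gates and literals without changing the wire count, so $\size{D}=\size{F}$ exactly and no constant adjustment is needed.
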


\pfdnfs

Observing that DNF is a restricted form of DNNF, we get the following non-closure result which was only known conditionally before.

\begin{corollary}\label{cor:negation}
There exist a class $\mathcal{D}$ of DNNFs and a constant $c>0$ such that, 
for every formula $D$ in $\mathcal{D}$, every DNNF equivalent to $\neg D$ has size at least $2^{c \cdot \size{D}}$.  
\end{corollary}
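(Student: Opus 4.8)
The plan is to reduce Corollary~\ref{cor:negation} to Lemma~\ref{lemma:dnfs} via the single syntactic observation that every $2$-DNF is already a DNNF. Concretely, I would take $\mathcal{D}$ to be exactly the class of $2$-DNF formulas produced by Lemma~\ref{lemma:dnfs} and argue that, read as Boolean circuits, these formulas satisfy the decomposability condition, so that the size lower bound of Lemma~\ref{lemma:dnfs} transfers verbatim.

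First I would spell out the inclusion of DNFs into DNNFs. A $2$-DNF is an NNF whose output gate is an $\vee$-gate fed by $\wedge$-gates, where each such $\wedge$-gate combines two literals on distinct variables. The two input subcircuits of any such $\wedge$-gate are single literals with disjoint (indeed singleton) variable sets, so the requirement $\vars(\sub(D,v_1)) \cap \vars(\sub(D,v_2)) = \emptyset$ holds at every $\wedge$-gate. Hence $D$ is decomposable, i.e.\ a DNNF. A term repeating a variable can be assumed away, since it either simplifies or is contradictory; the formulas supplied by Lemma~\ref{lemma:dnfs} are genuine $2$-DNFs and cause no difficulty here.

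Next I would note that viewing a $2$-DNF as a DNNF does not alter the underlying circuit, so $\size{D}$ is the same under either reading and the constant $c$ of Lemma~\ref{lemma:dnfs} carries over unchanged. Applying Lemma~\ref{lemma:dnfs} then yields, for every $D \in \mathcal{D}$, that every DNNF equivalent to $\neg D$ has size at least $2^{c \cdot \size{D}}$, which is precisely the claim.

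I do not expect a genuine obstacle here: all of the combinatorial and circuit-complexity substance sits upstream, in the DNNF lower bound of Theorem~\ref{thm:mainnontechnical} and its transfer to negations of $2$-CNFs in Lemma~\ref{lemma:dnfs}. The only point demanding any care is confirming the inclusion of DNF into DNNF and that this inclusion is size-preserving, so that the unconditional lower bound survives the passage from the restricted language (DNF) to the full language (DNNF). This is exactly what upgrades the previously only conditional non-closure of DNNFs under negation to an unconditional statement.
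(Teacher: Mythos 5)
Your proposal is correct and matches the paper's own argument, which derives Corollary~\ref{cor:negation} from Lemma~\ref{lemma:dnfs} via precisely the observation that DNF is a restricted form of DNNF (each term's conjuncts are literals on distinct variables, since the terms arise from negating edge clauses $x \vee y$ with $x \neq y$, so decomposability holds trivially). Your more explicit verification of the decomposability condition and of size preservation is a faithful elaboration of the same one-line reduction.
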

}

\corollariessection

\section{Conclusion}

We proved an unconditional, strongly exponential separation between the representational power of CNFs and that of DNNFs and discussed its consequences in the  area of knowledge compilation~\cite{DarwicheM02}. Let us close by mentioning directions for future research.

In order to prove the lower bound of Theorem~\ref{thm:mainnontechnical}, we generalized arguments concerning paths in branching programs to the tree-shaped certificates of DNNFs. It would be interesting to know whether other lower bounds for branching programs can be lifted to (suitably restricted) versions of DNNFs along similar lines.

Recent progress notwithstanding~\cite{BeameLRS13}, several separations between well-known representation languages are known to hold only conditionally~\cite{DarwicheM02}. For instance, it is known that DNFs cannot be compiled efficiently into so-called deterministic DNNFs unless the polynomial hierarchy collapses~\cite{Selman96,Cadoli02}.  
It would be interesting to show this separation unconditionally.  

Finally, there is a long line of research proving upper bounds for DNNFs and restrictions (see \cite{Razgon13,OztokD14b,OztokD14} for some recent contributions). We believe that these results should be complemented by lower bounds as in~\cite{Razgon13,Razgon14c}, and hope that the ideas developed in this paper will contribute to this project.

\paragraph*{Acknowledgments.} The first author was supported by the European Research Council (Complex Reason, 239962)
and the FWF Austrian Science Fund (Parameterized Compilation, P26200).
The third author has received partial support by a Qualcomm grant administered by \'Ecole Polytechnique. 
The results of this paper were conceived during a research stay of the second and third author at the Vienna University of Technology. The stay of the second author was made possible by financial support by the ANR Blanc COMPA. The stay of the third author was made possible by financial support by the ANR Blanc International ALCOCLAN. The fourth author was supported by the European Research Council (Complex Reason, 239962).

\end{document}